\title{Optimal Exact-Regenerating Codes for Distributed Storage at the MSR and MBR Points via a Product-Matrix Construction}
\author{K.~V.~Rashmi, Nihar~B.~Shah, and P.~Vijay~Kumar, {\em Fellow, IEEE}
\thanks{K.~V.~Rashmi, Nihar~B.~Shah and P.~Vijay~Kumar are with the Department of Electrical Communication Engineering, Indian Institute of Science, Bangalore-560012, India (email: \{rashmikv,nihar,vijay\}@ece.iisc.ernet.in). P. Vijay Kumar is also an adjunct faculty member of the Electrical Engineering Systems Department at the University of Southern California, Los Angeles, CA 90089-2565.}
}
\newtheorem{thm}{Theorem}
\newtheorem{lem}[thm]{Lemma}
\newtheorem{cor}[thm]{Corollary}
\newtheorem{remark}{Remark}
\newcommand{\beq}{\begin{equation}}
\newcommand{\eeq}{\end{equation}}
\newcommand{\bea}{\begin{eqnarray}}
\newcommand{\eea}{\end{eqnarray}}
\newcommand{\bean}{\begin{eqnarray*}}
\newcommand{\eean}{\end{eqnarray*}}
\newcommand{\bit}{\begin{itemize}}
\newcommand{\eit}{\end{itemize}}
\newcommand{\ben}{\begin{enumerate}}
\newcommand{\een}{\end{enumerate}}
\newcommand{\blem}{\begin{lem}}
\newcommand{\elem}{\end{lem}}
\newcommand{\bthm}{\begin{thm}}
\newcommand{\ethm}{\end{thm}}
\newcommand{\bpf}{\begin{proof}}
\newcommand{\epf}{\end{proof}}
\begin{document}

\maketitle
\thispagestyle{empty}

\begin{abstract}
Regenerating codes are a class of distributed storage codes that optimally trade the bandwidth needed for repair of a failed node with the amount of data stored per node of the network.  Minimum Storage Regenerating (MSR) codes minimize first, the amount of data stored per node, and then the repair bandwidth, while Minimum Bandwidth Regenerating (MBR) codes carry out the minimization in the reverse order. An $[n,~k,~d]$ regenerating code permits the data to be recovered by connecting to any $k$ of the $n$ nodes in the network, while requiring that repair of a failed node be made possible by connecting (using links of lesser capacity) to any $d$ nodes. Previous, explicit and general constructions of exact-regenerating codes have been confined to the case $n=d+1$.

In this paper, we present optimal, explicit constructions of MBR codes for all feasible values of $[n,~k,~d]$ and MSR codes for all $[n,~k,~d \geq 2k-2]$, using a product-matrix framework. The particular product-matrix nature of the constructions is shown to significantly simplify system operation.  To the best of our knowledge, these are the first constructions of exact-regenerating codes that allow the number $n$ of nodes in the distributed storage network, to be chosen independent of the other parameters.

The paper also contains a simpler description, in the product-matrix framework, of a previously constructed MSR code in which the parameter $d$ satisfies $[n=d+1,~ k,~d \geq 2k-1]$.

\end{abstract}

\section{Introduction}
In a distributed storage system, information pertaining to a data file (the \textit{message}) is dispersed across nodes in a network in such a manner that an end-user can retrieve the data stored by tapping into a subset of the nodes.  A popular option that reduces network congestion and that leads to increased resiliency in the face of  node failures, is to employ erasure coding, for example by calling upon maximum-distance-separable (MDS) codes such as Reed-Solomon (RS) codes.   Let $B$ be the total file size measured in terms of symbols over a finite field $\mathbb{F}_q$ of size $q$.  With RS codes, data is stored across $n$ nodes in the network in such a way that the entire message can be recovered by a data collector by connecting to any $k$ nodes, a process that we will refer to as \textit{data-reconstruction}.  Several distributed storage systems such as RAID-6~\cite{RAID}, OceanStore~\cite{oceanstore} and Total~Recall~\cite{totalRecall} employ such an  erasure-coding option.

\subsection{Regenerating Codes}

Upon failure of an individual node, a self-sustaining data-storage network must necessarily possess the ability to \textit{regenerate} (i.e., repair) a failed node. An obvious means to accomplish this is to permit the replacement node to connect to any $k$ nodes, download the entire message, and extract the data that was stored in the failed node. But downloading the entire $B$ units of data in order to recover the data stored in a single node that stores only a fraction of the entire message is wasteful, and raises the question as to whether there is a better option. Such an option is indeed available and provided by the concept of a \emph{regenerating code} introduced in the pioneering paper by Dimakis et~al.~\cite{DimKan1}.

Conventional RS codes treat each fragment stored in a node as a single symbol belonging to the finite field $\mathbb{F}_q$.  It can be shown that when individual nodes are restricted to perform only linear operations over $\mathbb{F}_q$, the total amount of data download needed to repair a failed node, can be no smaller than $B$, the size of the entire file.
In contrast, regenerating codes are codes over a vector alphabet and hence treat each fragment as being comprised of $\alpha$ symbols over the finite field $\mathbb{F}_q$.  Linear operations over $\mathbb{F}_q$ in this case, permit the transfer of a fraction of the data stored at a particular node.  Apart from this new parameter $\alpha$, two other parameters $d$ and $\beta$ are associated with regenerating codes. Under the definition of regenerating codes introduced in \cite{DimKan1}, a failed node is permitted to connect to an arbitrary set of $d$ of the remaining nodes while downloading $\beta \leq \alpha$ symbols from each node. This process is termed as \textit{regeneration} and the total amount $d\beta$ of data downloaded for repair purposes as the \textit{repair bandwidth}. Further, the set of $d$ nodes aiding in the repair are termed as \textit{helper nodes}. Typically, with a regenerating code, the average repair bandwidth is small compared to the size of the file $B$.

It will be assumed throughout the paper, that whenever mention is made of an $[n,~k,~d]$ regenerating code, the code is such that $k$ and $d$ are the minimum values under which data-reconstruction and regeneration can always be guaranteed. This restricts the range of $d$ to 
\beq k \leq d \leq n-1~. \eeq
The first inequality arises because if the regeneration parameter $d$ were less than the data-reconstruction parameter $k$ then one could, in fact, reconstruct data by connecting to any $d<k$ nodes (treating the data stored in every other node as a function of that stored in these $d$ nodes) thereby contradicting the minimality of $k$. Finally, while a regenerating code over $\mathbb{F}_q$ is associated with the collection of parameters
\[
\{ n, \ k, \ d, \ \alpha, \ \beta, \ B\}~,
\]
it will be found more convenient to regard parameters $\{n,~k,~d\}$ as primary and $\{\alpha,~\beta,~B\}$ as secondary and thus we will make frequent references in the sequel, to a code with these six parameters as an $[n,~k,~d]$ regenerating code having parameter set
$(\alpha,~\beta,~B)$.

\subsection{Cut-Set Bound and the Storage vs Repair-Bandwidth Tradeoff}

A major result in the field of regenerating codes is the proof in \cite{YunDimKan} that uses the cut-set bound of network coding to establish that the parameters of a regenerating code must necessarily satisfy : \bea B & \leq & \sum_{i=0}^{k-1} \min\{\alpha,(d-i)\beta\}~. \label{eq:cut-set bound} \eea 

It is desirable to minimize both $\alpha$ as well as $\beta$ since, minimizing $\alpha$ results in a minimum storage solution, while minimizing $\beta$ (for fixed $d$) results in a storage solution that minimizes repair bandwidth.  As can be deduced from~\eqref{eq:cut-set bound}, it is not possible to minimize both $\alpha$ and $\beta$ simultaneously and thus there is a tradeoff between choices of the parameters $\alpha$ and $\beta$. The two extreme points in this tradeoff are termed the minimum storage regeneration (MSR) and minimum bandwidth regeneration (MBR) points respectively. The parameters $\alpha$ and $\beta$ for the MSR point on the tradeoff can be obtained by first minimizing $\alpha$ and then minimizing $\beta$ to obtain \bea
\alpha & = & \frac{B}{k}~, \nonumber \\
\beta & = & \frac{B}{k(d-k+1)}~. \label{eq:MSR_parameters} \eea
Reversing the order, leads to the MBR point which thus corresponds
to \bea
\beta & = & \frac{2B}{k(2d-k+1)}~, \nonumber \\
\alpha & = & \frac{2dB}{k(2d-k+1)}~.
 \label{eq:MBR_parameters} \eea

We define an \textit{optimal} $[n,~k,~d]$ regenerating code as a code with parameters $(\alpha,~ \beta,~ B)$ satisfying the twin requirements that \begin{enumerate} \item the parameter set $(\alpha,~ \beta, ~B)$ achieves the cut-set bound with equality and \item decreasing either $\alpha$ or $\beta$ will result in a new parameter set that violates the cut set bound. \end{enumerate} An MSR code is then defined as an $[n,~k,~d]$ regenerating code whose parameters $(\alpha, ~\beta,~ B)$ satisfy \eqref{eq:MSR_parameters} and similarly, an MBR code as one with parameters $(\alpha, ~\beta,~ B)$ satisfying \eqref{eq:MBR_parameters}.  Clearly, both MSR and MBR codes are optimal regenerating codes.

\subsection{Striping of Data}
The nature of the cut-set bound permits a divide-and-conquer approach to be used in the application of optimal regenerating codes to large file sizes, thereby simplifying system implementation.  This is explained below.

Given an optimal $[n,~k,~d]$ regenerating code with parameter set $(\alpha,\beta,B)$, a second optimal regenerating code with parameter set $(\alpha^{'}=\delta \alpha , \ \beta^{'}=\delta \beta , \ B^{'}=\delta B)$ for any positive integer $\delta$ can be constructed, by dividing the $\delta B$ message symbols into $\delta$ groups of $B$ symbols each, and applying the $(\alpha,~\beta,~B)$ code to each group independently. Secondly, a common feature of both MSR and MBR regenerating codes is that in either case, their parameter set $(\alpha,\beta,B)$ is such that both $\alpha$ and $B$ are multiples of $\beta$ and further that $\frac{\alpha}{\beta}$, $\frac{B}{\beta}$ are functions only of $n$, $k$ and $d$. It follows that if one can construct an (optimal) $[n,~k,~d]$ MSR or MBR code with $\beta=1$, then one can construct an (optimal) $[n,~k,~d]$ MSR or MBR code for any larger value of $\beta$. In addition, from a practical standpoint, a code constructed through concatenation of codes for a smaller $\beta$ will in general, be of lesser complexity (see Section~\ref{subsec:complexity}). For these reasons, in the present paper we design codes for the case of $\beta=1$. Thus, throughout the remainder of the paper, we will assume that $\beta=1$. In the terminology of distributed storage, such a process is called striping.

We document below the values of $\alpha$ and $B$ of MSR and MBR codes respectively,  when $\beta=1$: \bea \alpha&=&d-k+1~, \label{eq:MSR_alpha_beta1} \\ B &=& k (d-k+1) \label{eq:MSR_B_beta1} \eea for MSR codes and \bea \alpha &=& d~,\label{eq:MBR_alpha_beta1}\\ B &=& kd - {k\choose 2}~ \label{eq:MBR_B_beta1} \eea in the case of
MBR codes.

\subsection{Additional Terminology}

\paragraph{Exact versus Functional Regeneration} In the context of a regenerating code, by functional regeneration of a failed node $f$, we will mean, replacement of the failed node by a new node $f'$ in such a way that following replacement, the resulting network of $n$ nodes continues to possess the data-reconstruction and regeneration properties.   In contrast, by exact-regeneration, we mean replacement of a failed node $f$ by a replacement node $f'$ that stores exactly the same data as was stored in node $f$.  We will use the term {\em exact-regenerating code} to denote a regenerating code that has the capability of exactly regenerating each instance of a failed node. Clearly where it is possible, an exact-regeneration code is to be preferred over a functional-regeneration code since, under functional regeneration, there is need for the network to inform all nodes in the network of the replacement, whereas this is clearly unnecessary under exact-regeneration.

\paragraph{Systematic Regenerating Codes} A systematic regenerating code can be defined as a regenerating code designed in such a way that the $B$ message symbols are explicitly present amongst the $k \alpha$ code symbols stored in a select set of $k$ nodes, termed as the systematic nodes. Clearly, in the case of systematic regenerating codes, exact-regeneration of (the systematic portion of the data stored in) the systematic nodes is mandated.

\paragraph{Linear Regenerating Codes}
A linear regenerating code is defined as a regenerating code in which \ben \item the code symbols stored in each node are linear combinations over $\mathbb{F}_q$ of the $B$ message symbols $\{u_i\}_{i=1}^{B}$, \item the symbols passed by a helper node $h$ to aid in the regeneration of a failed node $f$ are linear over $\mathbb{F}_q$ in the $\alpha$ symbols stored in node $h$. \een

It follows as an easy consequence, that linear operations suffice for a data collector to recover the data from the $k \alpha$ code symbols stored in the $k$ nodes that it has connected to. Similarly, the replacement node for a failed node $f$, performs linear operations on the $d$ symbols passed on to it by the $d$ helper nodes $\{h_i\}_{i=1}^d$ aiding in the regeneration.

\subsection{Results of the Present Paper}

While prior work is described in greater detail in Section~\ref{sec:priorWork}, we begin by providing a context for the results presented here.

\paragraph*{Background} To-date, explicit and general constructions for exact-regenerating codes at the MSR point have been found only for the case $[n=d+1,~k,~d \geq 2k-1]$. Similarly at the MBR point, the only explicit code to previously have been constructed is for the case $d=n-1$.  Thus, all existing code constructions limit the total number of nodes $n$ in the system to $d+1$. This is restrictive since in this case, the system can handle only a single node failure at a time. Also, such a system does not permit additional storage nodes to be brought into the system.

A second open problem in this area that has recently drawn attention is as to whether or not the storage-repair bandwidth tradeoff is achievable under the additional requirement of exact-regeneration. It has previously been shown that the MSR point is not achievable for any $[n,~k,~d\leq 2k-3]$ with $\beta=1$, but is achievable for all parameters $[n,~k,~d]$ when $B$ (and hence $\beta$ as well) is allowed to approach
infinity. %It remains an open problem whether the MSR and MBR points are achievable for a general set of parameters. By presenting explicit constructions, the present paper answers this question in the affirmative. In the constructions presented here, the file size is small and as dictated by the cut-set bound.

\paragraph*{Results Presented in Present Paper} In this paper, (optimal) explicit constructions of exact-regenerating MBR codes for all feasible values of $[n,~k,~d]$ and exact-regenerating MSR codes for all $[n,~k,~d \geq 2k-2]$ are presented.  The constructions are of a product-matrix nature that is shown to significantly simplify operation of the distributed storage network. The constructions presented prove that the MBR point for exact-regeneration can be achieved for all values of the parameters and that the MSR point can be achieved for all parameters satisfying $d \geq 2k-2$. In both constructions, the message size is as dictated by cut-set bound. The paper also contains a simpler description, in the product-matrix framework, of an MSR code for the parameters $[n=d+1,~k,~d\geq 2k-1]$ that was previously constructed in~\cite{ourITW,Changho}.

A brief overview of prior work in this field is provided in the next section, Section~\ref{sec:priorWork}. The product-matrix framework underlying the code construction is described in Section~\ref{sec:prod_mx}. An exact-regenerating MBR code for all feasible values of the parameters $[n, \ k, \ d]$ is presented in Section~\ref{sec:MBR_prodmx}, and an exact-regenerating MSR code for all $[n,~k,~d \geq 2k-2]$ is presented in Section~\ref{sec:MSR_prodmx}. Implementation advantages of the particular product-matrix nature of the code constructions provided here are described in Section~\ref{sec:sys_adv}. The final section, Section~\ref{sec:conclusion}, draws conclusions. Appendix~\ref{app:MISER_prodmx} contains a simpler description, in the product-matrix framework, of an MSR code with parameter $d$ satisfying $[n=d+1,~k,~d \geq  2k-1]$, that was previously constructed in \cite{ourITW,Changho}.

\section{Prior Work}\label{sec:priorWork}
The concept of regenerating codes was introduced in~\cite{DimKan1}, where it was shown that permitting the storage nodes to store more than $B/k$ units of data helps in reducing the repair bandwidth. Several distributed systems were analyzed, and estimates of the mean node availability in such systems obtained. Using these values, it was shown through simulation, that regenerating codes can reduce repair bandwidth compared to other designs, while simplifying system
architecture.

The problem of minimizing repair bandwidth for functional repair of a failed storage node is considered in~\cite{DimKan1,YunDimKan}. Here, the evolution of the storage network through a sequence of failures and regenerations is represented as a network, with all possible data collectors represented as sinks. The data-reconstruction requirement is formulated as a multicast network coding problem, with the network having an infinite number of nodes. The cut-set analysis of this network leads to the relation between the parameters of a regenerating code given in equation~\eqref{eq:cut-set bound}. It can be seen that there is a tradeoff between the choice of the parameters $\alpha$ and $\beta$ for a fixed $B$ and this is termed as the storage-repair bandwidth tradeoff. It has been shown~(\!\!\cite{YunDimKan,WuAchievable}) that this tradeoff is achievable under functional-regeneration. However, the coding schemes suggested are not explicit and require large field size. The journal version~\cite{YunDimKanJournal} also contains a handcrafted functional-regenerating code for the MSR point with  $[n=4, \ k=2,\ d=3]$.

A study of the computational complexity of regenerating codes is carried out in~\cite{Complexi}, in the context of random linear regenerating codes that achieve functional repair.

The problem of exact-regeneration was first considered independently in~\cite{WuDimISIT,ourAllerton} and~\cite{DimSearch}. In~\cite{WuDimISIT}, it is shown that the MSR point is achievable under exact-regeneration when $(k=2, \ d=n-1)$. The coding scheme proposed is based on the concept of interference alignment developed in the context of wireless communication. However, the construction is not explicit and has a large field size requirement. In~\cite{DimSearch}, the authors carry out a computer search to find exact-regenerating codes at the MSR point, resulting in identification of codes with parameters $[n=5, \ k=3, \ d=4]$.

The first, explicit construction of regenerating codes for a general set of parameters was provided for the MBR point in~\cite{ourAllerton} with $d=n-1$ and arbitrary $k$.  These codes have low regeneration complexity as no computation is involved during the exact-regeneration of a failed node. The field size required is of the order of $n^2$. In addition~\cite{ourAllerton} (see also \cite{ourMISERjournal}) also contains the construction of an explicit MSR code for $d=k+1$, that performs approximately-exact-regeneration of all failed nodes, i.e., regeneration where a part of the code is exactly regenerated, and the remaining is functionally regenerated (it is shown subsequently in~\cite{ourITW,ourMISERjournal} that exact-regeneration is not possible, when $k>4$, for the set of parameters considered therein).

MSR codes performing a hybrid of exact and functional-regeneration are provided in~\cite{WuArxiv}, for the parameters $d=k+1$ and $n>2k$. The codes given even here are non-explicit, and have high complexity and large field-size requirement.

A code structure that guarantees exact-regeneration of just the systematic nodes is provided in~\cite{ourITW}, for the MSR point with parameters $[n=d+1,~k,~d\geq 2k-1]$. This code makes use of interference alignment, and is termed as the `MISER' code in journal-submission version~\cite{ourMISERjournal} of~\cite{ourITW}. Subsequently, it was shown in~\cite{Changho} that for this set of parameters, the code introduced in~\cite{ourITW} for exact-regeneration of only the systematic nodes can also be used to repair the non-systematic (parity) node failures exactly provided repair construction schemes are appropriately designed. Such an explicit repair scheme is indeed designed and presented in~\cite{Changho}. The paper~\cite{Changho} also contains an exact-regenerating MSR code for parameter set $[n=5, \ k=3, \ d=4]$.

A proof of non-achievability of the cut-set bound on exact-regeneration at the MSR point, for the parameters $[n,~k,~d<2k-3]$ when $\beta=1$, is provided in~\cite{ourMISERjournal}. On the other hand, the MSR point is shown to be achievable in the limiting case of $B$ approaching infinity (i.e., $\beta$ approaching infinity)
in~\cite{Jafar_arxiv,Changho_arxiv_intfalign}.

A flexible setup for regenerating codes is described in~\cite{ourFlex}, where a data collector (or a replacement node) can perform data-reconstruction (or regeneration) irrespective of the number of nodes to which it connects, provided the total data downloaded exceeds a certain threshold.

In~\cite{ourAllertonJournal}, the authors establish that essentially all points on the interior of the tradeoff (i.e., points other than MSR and MBR) are not achievable under exact-regeneration. 

\section{The Common Product-Matrix Framework}\label{sec:prod_mx}
The constructions described in this paper, follow a common product-matrix framework. Under this framework, each codeword in the distributed storage code can be represented by an $(n \times \alpha)$ \textit{code matrix} $C$ whose $i^{th}$ row $\underline{c}_i^t$ contains the $\alpha$ symbols stored by the $i^{th}$ node. Each code matrix is the product \bea C = \Psi M \eea
of an $(n \times d)$ \textit{encoding matrix} $\Psi$ and an $(d \times \alpha)$ \textit{message matrix} $M$. The entries of the matrix $\Psi$ are fixed a priori and are independent of the message symbols. The message matrix $M$ contains the $B$ message symbols, with some symbols possibly repeated. We will refer to the $i^{th}$ row $\underline{\psi}_i^t$ of $\Psi$ as the encoding vector of node $i$ as it is this vector that is used to encode the message into the form in which it is stored within the $i^{th}$ node:
\beq \underline{c}_i^t = \underline{\psi}_i^t M~, \eeq
where the superscript `$t$' is used to denote the transpose of a matrix. Throughout this paper, we consider all symbols to belong to a finite field $\mathbb{F}_q$ of size $q$.

This common structure of the code matrices leads to common architectures for both data-reconstruction and exact-regeneration, as explained in greater detail below. It also endows the codes with implementation advantages that are discussed in Section~\ref{sec:sys_adv}.

%Observe that the product-matrix framework generates the $\alpha$ symbols stored in a node via the product of a fixed encoding vector with a message matrix. This is in contrast to a general linear code that would require the product of $\alpha$ encoding vectors (from the generator matrix) with a message vector.

Data-reconstruction amounts to recovering the message matrix $M$ from the $k\alpha$ symbols obtained from an arbitrary set of $k$ storage nodes. Let us denote the set of $k$ nodes to which the data collector connects as $\{i_1,\ldots,i_k\}$. The $j^{th}$ node in this set passes on the message vector $\underline{\psi}_{i_j}^t M$ to
the data collector. The data collector thus obtains the product matrix \[ \Psi_{\text{\tiny DC}} M~,
\] where $\Psi_{\text{\tiny DC}}$ is the submatrix of $\Psi$ consisting of the $k$ rows $\{i_1,\ldots,i_k\}$.
It then uses the properties of the matrices $\Psi$ and $M$ to recover the message. The precise procedure for recovering $M$ is a function of  the particular construction.

As noted above, each node in the network is associated to a distinct $(d \times 1)$ encoding vector $\underline{\psi}_i$. In the regeneration process, we will need to call upon a related vector $\underline{\mu}_i$ of length $\alpha$, that contains a subset of the components of $\underline{\psi}_i$. To regenerate a failed node $f$, the node replacing the failed node connects to an arbitrary subset $\{h_1,\ldots,h_d\}$ of $d$ storage nodes which we will refer to as the $d$ {\em helper nodes}.  Each helper node passes on the inner product of the $\alpha$ symbols stored in it with $\underline{\mu}_f$, to the replacement node: the helper node $h_j$ passes \[ \underline{\psi}_{h_j}^t M\underline{\mu}_f~.\] The replacement node thus obtains the product matrix \[ \Psi_{\text{repair}} M \underline{\mu}_f ~,\] where $\Psi_{\text{repair}}$ is the submatrix of $\Psi$ consisting of the $d$ rows $\{h_1,\ldots,h_d\}$. From this it turns out, as will be shown subsequently, that one can recover the desired symbols. Here again, the precise procedure is dependent on the particular construction.

\begin{remark}
An important feature of the product-matrix construction presented here, is that each of the nodes $h_j$ participating in the regeneration of node $f$, needs only have knowledge of the encoding vector of the failed node $f$ and not the identity of the other nodes participating in the regeneration. This significantly simplifies the operation of the system.
\end{remark}

\textit{Systematic Codes:} The following theorem shows that any linear exact-regenerating code can be converted to a systematic form via a linear remapping of the symbols. The proof of the theorem may be found in Appendix~\ref{app:gen_mx}.

\begin{thm}
Any linear exact-regenerating code can be converted to a systematic form via a linear remapping of the message symbols. Furthermore, the resulting code is also linear and possesses the data-reconstruction and exact-regeneration properties of the original code.
\end{thm}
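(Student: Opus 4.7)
The plan is to exploit the linear structure to perform an invertible change of basis on the message space so that, viewed in the new basis, the $B$ message symbols appear explicitly in a chosen set of $k$ nodes while the code itself is merely reparameterized and therefore retains all of its properties.

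First, I would fix an arbitrary subset $S \subset \{1,\ldots,n\}$ of $k$ nodes to designate as the systematic nodes. By linearity, the $\alpha$ symbols stored at node $i$ can be written as $G_i \underline{u}$, where $\underline{u} = (u_1,\ldots,u_B)^t$ is the column of message symbols and $G_i$ is an $\alpha \times B$ matrix over $\mathbb{F}_q$. Stacking the $G_i$ for $i \in S$ yields a $k\alpha \times B$ matrix $G_S$. The data-reconstruction property, applied to $S$, states that $\underline{u}$ can be recovered linearly from $G_S \underline{u}$; hence $G_S$ has full column rank $B$. In particular, one may select $B$ linearly independent rows of $G_S$; let $G^{\star}$ denote the resulting $B \times B$ invertible matrix.

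Next, I would define the remapping $\underline{u}' = G^{\star} \underline{u}$, so that $\underline{u} = (G^{\star})^{-1} \underline{u}'$. Under this substitution the stored content at node $i$ becomes $G_i (G^{\star})^{-1} \underline{u}'$, which is still a linear function of the new message $\underline{u}'$, so the code remains linear. By the very choice of $G^{\star}$, the $B$ distinguished rows of $G_S$, now regarded as functions of $\underline{u}'$, evaluate to $G^{\star} (G^{\star})^{-1} \underline{u}' = \underline{u}'$; hence the $B$ new message symbols appear explicitly among the $k\alpha$ symbols stored in the nodes of $S$, which is precisely the systematic requirement.

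Finally, I would verify that both operational properties survive the relabeling. Data-reconstruction from an arbitrary set of $k$ nodes $\{i_1,\ldots,i_k\}$ was achievable for $\underline{u}$; since $\underline{u}' = G^{\star} \underline{u}$ is an invertible linear function of $\underline{u}$, the new message is equally recoverable by composing the original reconstruction map with $G^{\star}$. Exact-regeneration is a property of the $\mathbb{F}_q$-linear dependencies among stored symbols at the different nodes, which are completely unaffected by the choice of basis for the message space; the same helper symbols transmitted under the same rule allow the replacement node to reproduce the same stored vector (now interpreted as a linear combination of $\underline{u}'$). There is no substantial obstacle in this argument — it is essentially a change-of-basis observation — so the chief task is only to be careful about the accounting (for instance, noting that ``the $B$ message symbols appear among the $k\alpha$ stored symbols of $S$'' does not demand an even split across the $k$ systematic nodes; in the MSR case $k\alpha = B$ forces $G_S$ to be square and invertible, so the split is automatically $\alpha$ per node, whereas in the MBR case $B < k\alpha$ and the remaining $k\alpha - B$ symbols simply remain nontrivial linear combinations of $\underline{u}'$).
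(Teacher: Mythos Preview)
Your proposal is correct and follows essentially the same approach as the paper: both select $B$ linearly independent generator columns (rows, in your transposed convention) from the chosen $k$ nodes using the data-reconstruction property, form an invertible $B \times B$ matrix, and pre-compose with its inverse to make the code systematic while preserving all linear dependencies. The paper additionally frames this via a notion of \emph{equivalent codes} (codes related by a non-singular message-space transformation and per-node symbol transformations share the same nodal column spaces $W_i$), but the substance of the argument is identical to yours.
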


Thus, all codes provided in the present paper can be converted to a systematic form via a linear remapping of the message symbols. Specific details on the product-matrix MBR and MSR codes in systematic form are provided in the respective sections, Sections~\ref{sec:MBR_prodmx} and~\ref{sec:MSR_prodmx}.

\section{The Product-Matrix MBR Code Construction} \label{sec:MBR_prodmx}

In this section, we identify the specific make-up of the encoding matrix $\Psi$ and the message matrix $M$ that results in an $[n,k,d]$ MBR code with $\beta=1$.   A notable feature of the construction is that it is applicable to all feasible values of $[n,k,d]$, i.e., all $n,k,d$ satisfying $k \leq d \leq n-1$.  Since the code is required to be an MBR code with $\beta=1$, it must possess the data-reconstruction and exact-regeneration properties required of a regenerating code, and in addition, have parameters $\{\alpha, B\}$ that satisfy equations \eqref{eq:MBR_alpha_beta1} and \eqref{eq:MBR_B_beta1}.  Equation \eqref{eq:MBR_B_beta1} can be rewritten in the form:
\bean B & = & {k+1 \choose 2} + k(d-k)~. \eean  Thus the parameter set of the desired $[n,k,d]$ MBR code is $(\alpha=d, \beta=1, B = {k+1 \choose 2} + k(d-k))$.

Let $S$ be a $(k \times k)$ matrix constructed so that the ${k+1 \choose 2}$ entries in the upper-triangular half of the matrix are filled up by ${k+1 \choose 2}$ distinct message symbols drawn from the set $\{u_i\}_{i=1}^B$.  The ${k \choose 2}$ entries in the strictly lower-triangular portion of the matrix are then chosen so as to make the matrix $S$ a symmetric matrix.   The remaining $k(d-k)$ message symbols are used to fill up a second $(k \times (d-k))$ matrix $T$.  The message matrix $M$ is then defined as the  $(d \times d)$ symmetric matrix given by
\beq M = \left[
\begin{tabular}{>{$}c<{$} >{$}c<{$}}
S & T\\
T^t  & 0
\end{tabular}
\right]. \eeq  The symmetry of the matrix will be found to be instrumental when enabling node repair. Next, define the encoding matrix
$\Psi$ to be any $(n \times d)$ matrix of the form
\[ \Psi = \left[\begin{tabular}{>{$}c<{$}>{$}c<{$}}
\Phi & \Delta
\end{tabular}
\right], \] where $\Phi$ and $\Delta$ are $(n \times k)$ and $(n \times
(d-k))$ matrices respectively, chosen in such a way that
\begin{enumerate}
\item any $d$ rows of $\Psi$ are linearly independent,
\item any $k$ rows of $\Phi$ are linearly independent.
\end{enumerate}
The above requirements can be met, for example, by choosing $\Psi$
to be either a Cauchy~\cite{cauchy} or else a Vandermonde matrix.\footnote{Over a large finite field, every matrix $\Psi$ in general position will also satisfy these requirements.} 
As per the product-matrix framework, the code matrix is then given by $C \ = \  \Psi M $.

The two theorems below establish that the code presented is an $[n,k,d]$ MBR code by establishing respectively, the exact-regeneration and data-reconstruction  properties of the code.

\begin{thm}[MBR Exact-Regeneration] \label{thm:MBR_regen}
In the code presented, exact-regeneration of any failed node can be achieved by connecting to any $d$ of the $(n-1)$ remaining nodes.
\end{thm}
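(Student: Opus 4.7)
The plan is to exploit the symmetry of the message matrix $M$, which was flagged as essential in the construction. For MBR with $\beta=1$, note that $\alpha=d$ equals the length of the encoding vectors, so I would set $\underline{\mu}_f = \underline{\psi}_f$. Each helper node $h_j$ then sends the scalar $\underline{\psi}_{h_j}^t M \underline{\psi}_f$, and the replacement node assembles these $d$ scalars into the vector $\Psi_{\text{repair}} M \underline{\psi}_f$, where $\Psi_{\text{repair}}$ is the $(d\times d)$ submatrix of $\Psi$ formed by the rows indexed by $\{h_1,\dots,h_d\}$.

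Next I would invoke the first property built into $\Psi$: any $d$ rows are linearly independent. Hence $\Psi_{\text{repair}}$ is invertible over $\mathbb{F}_q$, so the replacement node can left-multiply by $\Psi_{\text{repair}}^{-1}$ to recover the vector $M\underline{\psi}_f$.

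The concluding step is to observe that, by construction, $M$ is symmetric. Therefore
\[
\bigl(M \underline{\psi}_f\bigr)^t \;=\; \underline{\psi}_f^t M^t \;=\; \underline{\psi}_f^t M,
\]
which is exactly the row vector $\underline{c}_f^t$ originally stored at the failed node $f$. Transposing the recovered column vector thus reproduces the contents of node $f$ exactly, completing the regeneration.

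There is no real analytical obstacle here; the subtlety is simply conceptual, namely recognizing why $M$ was forced to be symmetric. Without symmetry, one would only recover $M\underline{\psi}_f$, which reveals linear combinations of the message symbols but not the particular combinations $\underline{\psi}_f^t M$ stored at node $f$. I would make a point of emphasizing this in the write-up, and also note explicitly that each helper $h_j$ only needs to know $\underline{\psi}_f$ (not the identities of the other helpers) in order to compute its outgoing scalar, matching the remark made in Section~\ref{sec:prod_mx}.
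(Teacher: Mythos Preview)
Your proposal is correct and follows essentially the same argument as the paper: set $\underline{\mu}_f=\underline{\psi}_f$, collect the $d$ scalars into $\Psi_{\text{repair}} M \underline{\psi}_f$, invert $\Psi_{\text{repair}}$ using the ``any $d$ rows independent'' property, and finish by invoking the symmetry of $M$ to transpose $M\underline{\psi}_f$ into $\underline{\psi}_f^t M$. The only additions you make---spelling out the intermediate $M^t$ step and highlighting that each helper needs only $\underline{\psi}_f$---are cosmetic.
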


\begin{IEEEproof} Let $\underline{\psi}_f^t$ be the row of $\Psi$ corresponding to the failed node $f$. Thus the $d$ symbols stored in the failed node correspond to the vector \beq \underline{\psi}_f^t M  . \eeq The replacement for the failed node $f$ connects to an arbitrary set $\{h_j \mid j=1,\ldots,d\}$ of $d$ helper nodes.  Upon being contacted by the replacement node, the helper node $h_j$ computes the inner product \[\underline{\psi}_{h_j}^t M\underline{\psi}_f\] and passes on this value to the replacement node.  Thus, in the present construction, the vector $\underline{\mu}_f$ equals $\underline{\psi}_f$ itself. The replacement node thus obtains the $d$ symbols $\Psi_{\text{repair}}M \underline{\psi}_f$ from the $d$ helper nodes, where
\bean
\Psi_{\text{repair}} & = & \left[
\begin{array}{c} \underline{\psi}_{h_1}^t \\
\underline{\psi}_{h_2}^t \\
 \vdots \\
 \underline{\psi}_{h_d}^t \end{array} \right] .
 \eean
By construction, the $(d \times d)$ matrix $\Psi_{\text{repair}}$ is invertible.  Thus, the replacement node recovers $M \underline{\psi}_f$ through multiplication on the left by $\Psi_{\text{repair}}^{-1}$.  Since $M$ is symmetric, \beq \left( M \underline{\psi}_f \right)^t \ = \
\underline{\psi}_f^t M~, \eeq and this is precisely the data previously stored in the failed node.
\end{IEEEproof}

\begin{thm}[MBR Data-Reconstruction] \label{thm:MBR_reconstruction} In the code presented, all the $B$ message symbols can be recovered by connecting to any $k$ nodes, i.e., the message symbols can be recovered through linear operations on the entries of any $k$ rows of the matrix $C$.
\end{thm}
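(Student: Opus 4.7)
The plan is to exploit the block structure of the message matrix $M$ in the product $\Psi_{\text{\tiny DC}} M$ to peel off the unknowns in a two-step fashion. Write $\Psi_{\text{\tiny DC}} = [\Phi_{\text{\tiny DC}} \ \Delta_{\text{\tiny DC}}]$, where $\Phi_{\text{\tiny DC}}$ and $\Delta_{\text{\tiny DC}}$ are the $(k \times k)$ and $(k \times (d-k))$ submatrices of $\Phi$ and $\Delta$, respectively, indexed by the $k$ nodes contacted by the data collector. Multiplying out the block product yields
\[
\Psi_{\text{\tiny DC}} M \;=\; \bigl[\, \Phi_{\text{\tiny DC}} S + \Delta_{\text{\tiny DC}} T^t \;\big|\; \Phi_{\text{\tiny DC}} T \,\bigr],
\]
a matrix that the data collector observes in its entirety.

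First I would recover $T$ from the right block. By hypothesis any $k$ rows of $\Phi$ are linearly independent, so $\Phi_{\text{\tiny DC}}$ is an invertible $(k \times k)$ matrix; left-multiplying the right block by $\Phi_{\text{\tiny DC}}^{-1}$ returns $T$, which accounts for $k(d-k)$ of the message symbols. Next, with $T$ in hand, I would compute $\Delta_{\text{\tiny DC}} T^t$ (the matrix $\Delta_{\text{\tiny DC}}$ is part of the fixed, publicly-known encoding matrix) and subtract it from the left block to isolate $\Phi_{\text{\tiny DC}} S$. A second application of $\Phi_{\text{\tiny DC}}^{-1}$ then yields the symmetric matrix $S$ in full, which contains the remaining $\binom{k+1}{2}$ message symbols in its upper-triangular part. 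Together, this recovers all $B = \binom{k+1}{2} + k(d-k)$ message symbols by purely linear operations on the $k$ received rows, as required.

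There is no real obstacle: the proof is essentially a block-triangular decoding argument. The only point that requires care is the order of operations, namely that one must decode $T$ first in order to decouple it from $S$ in the left block; and the only prerequisite being invoked is the linear-independence condition on any $k$ rows of $\Phi$, which is built into the construction (and satisfied by Cauchy or Vandermonde choices of $\Psi$). Since the cut-set bound with equality gives exactly $B = kd - \binom{k}{2}$, and the data collector receives $kd$ symbols, no information is wasted; the two inversions above explain precisely how the $\binom{k}{2}$-dimensional redundancy in the received data is absorbed by the symmetry of $M$.
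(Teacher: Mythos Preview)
Your proof is correct and follows essentially the same approach as the paper: write $\Psi_{\text{\tiny DC}} M = [\Phi_{\text{\tiny DC}} S + \Delta_{\text{\tiny DC}} T^t \ | \ \Phi_{\text{\tiny DC}} T]$, use the invertibility of $\Phi_{\text{\tiny DC}}$ to recover $T$ from the right block, then peel off $\Delta_{\text{\tiny DC}} T^t$ and invert again to recover $S$. The paper states this more tersely (it simply says ``multiply on the left by $\Phi_{\text{\tiny DC}}^{-1}$, recover first $T$ and subsequently $S$''), but your explicit description of the subtraction step and your remarks on the order of decoding are exactly what is implicit there.
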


\begin{IEEEproof} Let \beq \Psi_{\text{\tiny DC}} = \left[
\begin{tabular}{>{$}c<{$}>{$}c<{$}} \Phi_{\text{\tiny DC}} &
\Delta_{\text{\tiny DC}}
\end{tabular}
\right] \eeq be the $(k \times d)$ submatrix of $\Psi$, corresponding to the $k$ rows of $\Psi$ to which the data collector connects.  Thus the data collector has access to the symbols \bea
\Psi_{\text{\tiny DC}} M = \left[
\begin{tabular}{>{$}c<{$}>{$}c<{$}} \Phi_{\text{\tiny DC}}
S  + \Delta_{\text{\tiny DC}} T^t  &
\Phi_{\text{\tiny DC}} T
\end{tabular}
\right].
\eea
By construction, $\Phi_{\text{\tiny DC}}$ is a non-singular matrix.  Hence, by multiplying the matrix $\Psi_{\text{\tiny DC}} M$ on the left by $\Phi_{\text{\tiny DC}}^{-1}$, one can recover first $T$ and subsequently, $S$.
\end{IEEEproof}

\subsection{An Example for the Product-Matrix MBR Code} \label{sec:MBR_example}

Let $n=6, \ k=3, \ d=4$.  Then $\alpha=d=4$ and $B=9$.  Let us choose $q=7$ so we are operating over $\mathbb{F}_7$.  The matrices $S$ and $T$ are filled up by the $9$ message symbols $\{u_i\}_{i=1}^9$ as shown below:
\beq S \ = \ \begin{bmatrix}
u_1 & u_2 & u_3 \\
u_2 & u_4 & u_5 \\
u_3 & u_5 & u_6 \end{bmatrix}, \quad T =\begin{bmatrix}
u_7 \\
u_8 \\
u_9
\end{bmatrix},
\eeq
so that the message matrix $M$ is given by
\beq M \ = \ \begin{bmatrix}
                   u_1 & u_2 & u_3 & u_7 \\
           u_2 & u_4 & u_5 & u_8 \\
           u_3 & u_5 & u_6 & u_9 \\
           u_7 & u_8 & u_9 & 0
           \end{bmatrix} .
\eeq
We choose $\Psi$ to be the $(6 \times 4)$ Vandermonde matrix over $\mathbb{F}_7$ given by \beq \Psi= \begin{bmatrix}
            1 & 1 & 1 & 1\\
            1 & 2 & 4 & 1\\
            1 & 3 & 2 & 6\\
            1 & 4 & 2 & 1\\
            1 & 5 & 4 & 6\\
            1 & 6 & 1 & 6
           \end{bmatrix}.  \eeq
           Fig.~\ref{fig:MBR_example} shows at the top, the $(6 \times 4)$ code matrix $C=\Psi M$ with entries expressed as functions of the message symbols $\{u_i \}_{i=1}^9$.  The rest of the figure explains how exact-regeneration of failed node $1$ takes place. To regenerate node $1$, the helper nodes (nodes $2,4,5,6$ in the example), pass on their respective inner products $\underline{\psi}_{\ell}^t M [1 \ 1 \ 1 \ 1]^t$ for $\ell=2,4,5,6$. The replacement node then recovers the data stored in the failed node by multiplying by $\Psi_{\text{repair}}^{-1}$ where
           \beq
           \Psi_{\text{repair}} \ = \
           \begin{bmatrix}
            1 & 2 & 4 & 1\\
            1 & 4 & 2 & 1\\
            1 & 5 & 4 & 6\\
            1 & 6 & 1 & 6
           \end{bmatrix}  \eeq
as explained in the proof of Theorem~\ref{thm:MBR_regen} above.

\begin{figure*}
\centering
\includegraphics[trim=.6in 5.6in 3.4in 0.6in, clip, width=.95\textwidth]{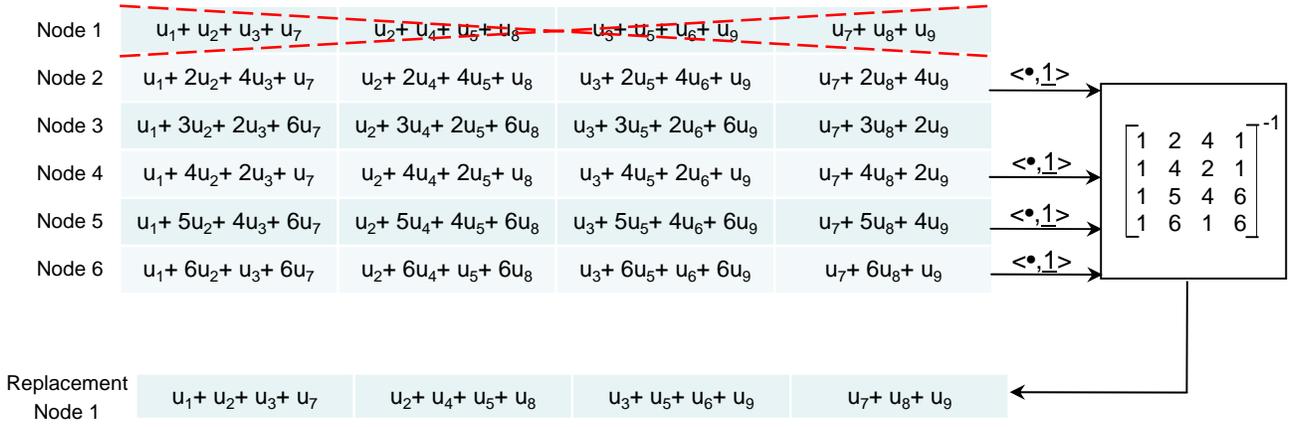}
\caption{An example for the MBR code construction: On failure of node $1$, the replacement node downloads one symbol each from nodes $2, \ 4, \ 5$ and $6$, using which node $1$ is exactly regenerated. The notation $<.,\underline{1}>$ indicates an inner product of the stored symbols with the vector $[1~1~1~1]^t$.} \label{fig:MBR_example}
\end{figure*}

\subsection{Systematic Version of the Code} As pointed out in Section~\ref{sec:prod_mx}, any exact-regenerating code can be made systematic through a non-singular transformation of the message symbols.  In the present case, there is a simpler approach, in which the matrix $\Psi$ can be chosen in such a way that the code is automatically systematic. We simply make the choice: \beq \Psi = \left[
\begin{tabular}{>{$}c<{$} >{$}c<{$}}
I_k &  0 \\
\tilde{\Phi} & \tilde{\Delta}
\end{tabular}
\right], \eeq where $I_k$ is the $(k \times k)$ identity matrix, $0$ is a $(k \times (d-k)$ zero matrix, $\tilde{\Phi}$ and $\tilde{\Delta}$ are matrices of sizes $((n-k) \times k)$ and $((n-k) \times (d-k))$ respectively, such that $\left[\tilde{\Phi}~~~\tilde{\Delta}\right]$ is a Cauchy matrix\footnote{In general, any matrix, all of whose submatrices are of full rank, will suffice.}. Clearly the code is systematic.  It can be verified that the matrix $\Psi $ has the properties listed just above Theorem~\ref{thm:MBR_regen}.

\section{The Product-Matrix MSR Code Construction}\label{sec:MSR_prodmx}

In this section, we identify the specific make-up of the encoding matrix $\Psi$ and the message matrix $M$ that results in an $[n,~k,~d]$ MSR code with $\beta=1$.   The construction applies to all parameters $[n,~k,~d\geq 2k-2]$.\footnote{As mentioned previously, it is impossible to construct linear MSR codes for the case of $d < 2k-3$ when $\beta=1$.} Since the code is required to be an MSR code with $\beta=1$, it must possess the data-reconstruction and exact-regeneration properties required of a regenerating code, and in addition, have parameters $\{\alpha, B\}$ that satisfy equations \eqref{eq:MSR_alpha_beta1} and \eqref{eq:MSR_B_beta1}.
We begin by constructing an MSR code in the product-matrix format for $d=2k-2$ and will show in Section~\ref{sec:MSR_extension} how this can be very naturally extended to yield codes with  $d > 2k-2$.

At the MSR point with $d=2k-2$ we have
\bea \alpha &=& d-k+1 \ = \  k-1~, \label{eq:alpha_k_minus_1} \eea
and hence \beq d= 2\alpha~. \label{eq:d_2alpha}\eeq
Also, \bea B &=& k \alpha \ = \  \alpha (\alpha+1)~. \label{eq:MSR_B_2k_2}\eea

We define the $(d \times \alpha)$ message matrix $M$ as \beq
M = \left[
\begin{tabular}{>{$}c<{$}}
S_1 \\
S_2  \\
\end{tabular}
\right], \eeq
where $S_1$ and $S_2$ are $(\alpha \times \alpha)$ symmetric matrices constructed such that the ${\alpha+1 \choose 2}$ entries in the upper-triangular part of each of the two matrices are filled up by ${\alpha+1 \choose 2}$ distinct message symbols. Thus, all the $B=\alpha(\alpha+1)$ message symbols are contained in the two matrices $S_1$ and $S_2$. The entries in the strictly lower-triangular portion of the two matrices $S_1$ and $S_2$ are chosen so as to make the matrices $S_1$ and $S_2$ symmetric.
% $\frac{\alpha(\alpha+1)}{2}$

Next, we define the encoding matrix $\Psi$ to be the  $(n \times d)$ matrix given by
\bea
\Psi = \begin{bmatrix}
    \Phi & \Lambda\Phi
    \end{bmatrix},
\eea
where $\Phi$ is an $(n \times \alpha)$ matrix and $\Lambda$ is an $(n \times n)$ diagonal matrix.  The elements of $\Psi$ are chosen such that the following conditions are satisfied:
\begin{enumerate}
\item any $d$ rows of $\Psi$ are linearly independent,
\item any $\alpha$ rows of $\Phi$ are linearly independent and
\item the $n$ diagonal elements of $\Lambda$ are distinct.
\end{enumerate}

The above requirements can be met, for example, by choosing $\Psi$ to be a Vandermonde matrix with elements chosen carefully to satisfy the third condition. Then under our code-construction framework, the $i^{th}$ row of the $(n \times \alpha)$ product matrix $C \ = \  \Psi M $, contains the $\alpha$ code symbols stored by the $i^{th}$ node.

The two theorems below establish that the code presented is an $[n,k,d]$ MSR code by establishing respectively, the exact-regeneration and data-reconstruction  properties of the code.

\begin{thm}[MSR Exact-Regeneration] In the code presented, exact-regeneration of any failed node can be achieved by connecting to any $d=2k-2$ of the remaining $(n-1)$ nodes.\label{thm:MSR_regen}
\end{thm}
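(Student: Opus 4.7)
The plan is to follow the pattern of the MBR exact-regeneration proof, exploiting the symmetry of $S_1$ and $S_2$ together with the structured form $\Psi=[\Phi\ \ \Lambda\Phi]$. Observe first that the encoding vector of any node $i$ decomposes as $\underline{\psi}_i^t = [\underline{\phi}_i^t\ \ \lambda_i\underline{\phi}_i^t]$, where $\underline{\phi}_i^t$ denotes the $i^{th}$ row of $\Phi$ and $\lambda_i$ is the $i^{th}$ diagonal entry of $\Lambda$. Consequently the $\alpha$ symbols stored at node $i$ take the form
\[
\underline{\psi}_i^t M \;=\; \underline{\phi}_i^t S_1 \;+\; \lambda_i\,\underline{\phi}_i^t S_2.
\]
Since the repair-query vector $\underline{\mu}_f$ must have length $\alpha$ and should ideally be a subset of the components of $\underline{\psi}_f$, this suggests taking $\underline{\mu}_f = \underline{\phi}_f$.

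With this choice, each of the $d$ helper nodes $h_j$ transmits the scalar $\underline{\psi}_{h_j}^t M \underline{\phi}_f$ to the replacement node. Stacking the $d=2\alpha$ received scalars yields the column vector $\Psi_{\text{repair}} M \underline{\phi}_f$, where $\Psi_{\text{repair}}$ is the $(d\times d)$ submatrix of $\Psi$ indexed by the helpers. By the first property imposed on $\Psi$, this submatrix is invertible, so premultiplication by $\Psi_{\text{repair}}^{-1}$ lets the replacement node recover
\[
M\underline{\phi}_f \;=\; \begin{bmatrix} S_1\underline{\phi}_f \\ S_2\underline{\phi}_f \end{bmatrix}.
\]

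Exploiting the symmetry of $S_1$ and $S_2$, the replacement node transposes each block to obtain the row vectors $\underline{\phi}_f^t S_1$ and $\underline{\phi}_f^t S_2$. Using the publicly known scalar $\lambda_f$ it then forms the linear combination $\underline{\phi}_f^t S_1 + \lambda_f\,\underline{\phi}_f^t S_2$, which by the decomposition above is exactly $\underline{\psi}_f^t M$, namely the content originally stored at the failed node.

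There is no real obstacle here; the construction has been engineered so that the block form $[\Phi\ \ \Lambda\Phi]$ ensures a length-$\alpha$ query $\underline{\mu}_f=\underline{\phi}_f$ still yields a full-rank $(d\times d)$ linear system at the replacement node, and so that the symmetry of $S_1,S_2$ converts the recovered column vector $M\underline{\phi}_f$ into the required row vector $\underline{\psi}_f^t M$. The main point to highlight is that only condition~1 on $\Psi$ is invoked during regeneration; conditions~2 and~3 will be reserved for the companion data-reconstruction theorem.
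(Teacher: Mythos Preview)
Your proof is correct and follows essentially the same approach as the paper: choose $\underline{\mu}_f=\underline{\phi}_f$, invert the $(d\times d)$ matrix $\Psi_{\text{repair}}$ to recover $M\underline{\phi}_f=\begin{bmatrix}S_1\underline{\phi}_f\\ S_2\underline{\phi}_f\end{bmatrix}$, transpose using the symmetry of $S_1,S_2$, and take the linear combination with $\lambda_f$. Your closing remark that only condition~1 on $\Psi$ is invoked here is also accurate.
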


\begin{IEEEproof} Let $\left[ \underline{\phi}_f^t \ \ \ \lambda_f \underline{\phi}_f^t\right]$ be the row of $\Psi$ corresponding to the failed node. Thus the $\alpha$ symbols stored in the failed node
 were
\beq \left[\underline{\phi}_f^t \ \ \ \lambda_f \underline{\phi}_f^t\right] M \ = \
\underline{\phi}_f^t S_{1}+ \lambda_f \underline{\phi}_f^t S_{2}~. \eeq

The replacement for the failed node $f$ connects to an arbitrary set $\{h_j \mid j=1,\ldots,d\}$ of $d$ helper nodes. Upon being contacted by the replacement node, the helper node $h_j$ computes the inner product $\underline{\psi}_{h_j}^t M \underline{\phi}_f$ and passes on this value to the replacement node. Thus, in the present construction, the vector $\underline{\mu}_f$ equals $\underline{\phi}_f$.  The replacement node thus obtains the $d$ symbols $\Psi_{\text{repair}}M \underline{\phi}_f$ from the $d$ helper nodes, where
\bean
\Psi_{\text{repair}} & = & \left[
\begin{array}{c} \underline{\psi}_{h_1}^t \\
\underline{\psi}_{h_2}^t \\
 \vdots \\
 \underline{\psi}_{h_d}^t \end{array} \right] .
 \eean
By construction, the $(d \times d)$ matrix $\Psi_{\text{repair}}$
is invertible.   Thus the replacement node now has access to \bea M \underline{\phi}_f &
= & \begin{bmatrix}
    S_{1} \underline{\phi}_f \nonumber \\
    S_{2}\underline{\phi}_f
    \end{bmatrix} . \eea

As $S_{1}$ and $S_{2}$ are symmetric matrices, the replacement node has thus acquired through transposition, both $\underline{\phi}_f^t S_{1}$ and $\underline{\phi}_f^t S_{2}$. Using this it can obtain
\beq \underline{\phi}_f^t S_{1}+ \lambda_f \underline{\phi}_f^t S_{2}~, \eeq
which is precisely the data previously stored in the failed node.
\end{IEEEproof}

\begin{thm}[MSR Data-Reconstruction] In the code presented, all the $B$ message symbols can be recovered by connecting to any $k$ nodes, i.e., the message symbols can be recovered through linear operations on the entries of any $k$ rows of the code matrix $C$.\label{thm:MSR_recon}
\end{thm}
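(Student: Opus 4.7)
The plan is to exploit the product-matrix structure together with the symmetry of $S_1,S_2$ to decouple the two symmetric blocks, and then invert a ``sandwich'' by $\Phi_{\text{\tiny DC}}$ to recover them. Let $\Psi_{\text{\tiny DC}} = [\Phi_{\text{\tiny DC}}\ \Lambda_{\text{\tiny DC}}\Phi_{\text{\tiny DC}}]$ denote the $k\times d$ submatrix of $\Psi$ corresponding to the $k$ contacted nodes, where $\Phi_{\text{\tiny DC}}$ is the associated $k\times \alpha$ submatrix of $\Phi$ and $\Lambda_{\text{\tiny DC}}$ is the corresponding $k\times k$ diagonal submatrix of $\Lambda$. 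The data collector holds the $k\times\alpha$ matrix $P := \Psi_{\text{\tiny DC}} M = \Phi_{\text{\tiny DC}} S_1 + \Lambda_{\text{\tiny DC}}\Phi_{\text{\tiny DC}} S_2$. First I would form the $k\times k$ matrix $Q := P\,\Phi_{\text{\tiny DC}}^t = A + \Lambda_{\text{\tiny DC}} B$, where $A := \Phi_{\text{\tiny DC}} S_1 \Phi_{\text{\tiny DC}}^t$ and $B := \Phi_{\text{\tiny DC}} S_2 \Phi_{\text{\tiny DC}}^t$ are symmetric (because $S_1,S_2$ are). Consequently $Q^t = A + B\,\Lambda_{\text{\tiny DC}}$, and the $(i,j)$-entry of $Q - Q^t$ is $(\lambda_i-\lambda_j)B_{ij}$; since the diagonal entries of $\Lambda_{\text{\tiny DC}}$ are distinct, every \emph{off-diagonal} entry of $B$ is determined, and substituting back into $Q$ yields every off-diagonal entry of $A$.

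The remaining task is to pin down the $2k$ diagonal entries of $A$ and $B$, and I expect this to be the main obstacle. The idea is that $\Phi_{\text{\tiny DC}}$ is $k\times\alpha$ of rank $\alpha = k-1$, so its left null space is spanned by a single nonzero vector $\underline{\nu}\in\mathbb{F}_q^k$. Since $A$ and $B$ both factor through $\Phi_{\text{\tiny DC}}$, we have $\underline{\nu}^t A = \underline{\nu}^t B = 0$; and by symmetry also $A\underline{\nu} = B\underline{\nu} = \underline{0}$. The claim I would need is that every component $\nu_i$ is nonzero. Granting this, the $i$-th coordinate of $B\underline{\nu}=\underline{0}$ reads $B_{ii}\nu_i = -\sum_{j\neq i} B_{ij}\nu_j$, whose right side is already known, so $B_{ii}$ is uniquely determined; the analogous computation gives $A_{ii}$. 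The non-vanishing of the $\nu_i$ will follow from the hypothesis that any $\alpha$ rows of $\Phi$ are linearly independent: if some $\nu_i=0$, then $\underline{\nu}^t \Phi_{\text{\tiny DC}} = \underline{0}$ would exhibit a nontrivial linear dependence among the remaining $k-1=\alpha$ rows of $\Phi_{\text{\tiny DC}}$, contradicting the assumption.

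Finally, having reconstructed $A$ and $B$ in full, I would select any $\alpha$ rows of $\Phi_{\text{\tiny DC}}$ to form an $\alpha\times\alpha$ matrix $\Phi'$, which is invertible by hypothesis. The corresponding principal $\alpha\times\alpha$ submatrix of $A$ equals $\Phi' S_1 (\Phi')^t$, so $S_1 = (\Phi')^{-1} A_{\text{sub}} (\Phi')^{-t}$, and similarly for $S_2$; this recovers all $B = \alpha(\alpha+1)$ message symbols through linear operations on the entries of the $k$ received rows. The whole argument rests on two clean structural facts — the commutator $\Lambda_{\text{\tiny DC}} B - B\Lambda_{\text{\tiny DC}}$ strips off off-diagonals via distinct $\lambda_i$, and the unique left null vector of $\Phi_{\text{\tiny DC}}$ has nowhere-zero entries via the MDS-type assumption on $\Phi$ — so the only step requiring care is establishing the non-vanishing of $\underline{\nu}$.
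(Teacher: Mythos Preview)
Your argument is correct and its first half coincides exactly with the paper's proof: form $\Phi_{\text{\tiny DC}} S_i \Phi_{\text{\tiny DC}}^t$ by post-multiplying the received data by $\Phi_{\text{\tiny DC}}^t$, and use symmetry together with the distinctness of the $\lambda_i$ to read off all off-diagonal entries of both ``sandwich'' matrices. Where you diverge is in the endgame. The paper never bothers to recover the diagonal entries of $A$ (its $P$): instead it observes that the off-diagonal entries of row $i$ of $A$ are precisely $\underline{\phi}_i^t S_1 [\,\underline{\phi}_1\ \cdots\ \underline{\phi}_{i-1}\ \underline{\phi}_{i+1}\ \cdots\ \underline{\phi}_{\alpha+1}\,]$, and since the $\alpha\times\alpha$ matrix on the right is invertible (any $\alpha$ rows of $\Phi$ are independent), one solves directly for $\underline{\phi}_i^t S_1$; stacking $\alpha$ such rows and inverting once more yields $S_1$. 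Your route---use the nowhere-zero left null vector of $\Phi_{\text{\tiny DC}}$ to pin down the diagonals, then invert an $\alpha\times\alpha$ principal sandwich---is a clean alternative that rests on the very same hypothesis (any $\alpha$ rows of $\Phi$ are independent) applied in a different guise. The paper's method is slightly more direct (it skips reconstructing the diagonal of $A$ altogether), while yours makes the rank-one kernel of $\Phi_{\text{\tiny DC}}$ explicit and arguably isolates more clearly \emph{why} the $\alpha$-row independence condition is exactly what is needed.
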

\begin{IEEEproof}
Let \beq \Psi_{\text{\tiny DC}} =\begin{bmatrix}
    \Phi_{\text{\tiny DC}} & \Lambda_{\text{\tiny DC}}\Phi_{\text{\tiny DC}}
    \end{bmatrix}
\eeq be the $(k \times d)$ submatrix of $\Psi$, containing the $k$
rows of $\Psi$ which correspond to the $k$ nodes to which the data collector
connects.  Hence the data collector obtains the symbols  \bea \Psi_{\text{\tiny
DC}} M &=& \begin{bmatrix}
    \Phi_{\text{\tiny DC}} & \Lambda_{\text{\tiny DC}}\Phi_{\text{\tiny DC}}
    \end{bmatrix}
    \begin{bmatrix}
    S_{1} \nonumber \\
    S_{2}
    \end{bmatrix} \\
&=& \begin{bmatrix}
    \Phi_{\text{\tiny DC}} S_{1} + \Lambda_{\text{\tiny DC}}\Phi_{\text{\tiny DC}}S_{2}
   \end{bmatrix}. \eea
The data collector can post-multiply this term with $\Phi_{\text{\tiny DC}}^T$ to obtain
\bea
    &\begin{bmatrix}
    \Phi_{\text{\tiny DC}} S_{1} + \Lambda_{\text{\tiny DC}}\Phi_{\text{\tiny DC}} S_{2}
   \end{bmatrix}
    \begin{matrix}
    \Phi_{\text{\tiny DC}}^{T}
    \end{matrix} \nonumber \\
 &= \Phi_{\text{\tiny DC}} S_{1} \Phi_{\text{\tiny DC}}^{T}+ \Lambda_{\text{\tiny DC}}\Phi_{\text{\tiny DC}} S_{2} \Phi_{\text{\tiny DC}}^{T}~.
\eea
Next, let the matrices $P$ and $Q$ be defined as
\bea
P = \Phi_{\text{\tiny DC}} S_{1} \Phi_{\text{\tiny DC}}^{T}~, \\
Q = \Phi_{\text{\tiny DC}} S_{2} \Phi_{\text{\tiny DC}}^{T}~.
\eea
As $S_{1}$ and $S_{2}$ are symmetric, the same is true of the matrices $P$ and $Q$.    In terms of $P$ and $Q$, the data collector has access to the symbols of the matrix
\beq
P + \Lambda_{\text{\tiny DC}} Q~.\eeq
The $(i,\ j)^{\text{th}}$, $1 \leq i,j \leq k$, element of this matrix is
\bea P_{ij} + \lambda_{i}Q_{ij}  , \label{eq:MSR_recon_ij_ele} \eea
while the $(j,\ i)^{\text{th}}$ element is given by
\bea
&P_{j i} + \lambda_{j}Q_{j i} \nonumber \\
&= P_{i j}+\lambda{j}Q_{i j}~, \label{eq:MSR_recon_ji_ele}
\eea
where equation (\ref{eq:MSR_recon_ji_ele}) follows from the symmetry of $P$ and $Q$.  By construction, all the $\lambda_i$ are distinct and hence using  (\ref{eq:MSR_recon_ij_ele}) and (\ref{eq:MSR_recon_ji_ele}), the data collector can solve for the values of $P_{ij}, \ Q_{ij}$ for all $i \neq j$.

Consider first the matrix $P$. Let $\Phi_{\text{\tiny DC}}$ be given by
\beq \Phi_{\text{\tiny DC}} = \left[\begin{tabular}{>{$}c<{$}}
                               \underline{\phi}^t_1 \\
                \vdots \\
                \underline{\phi}^t_{\alpha+1}
                              \end{tabular}\right].
\eeq
All the non-diagonal elements of $P$ are known.  The elements in the $i^{th}$ row (excluding the diagonal element) are given by
\beq \underline{\phi}_i^t  S_{1} \left[ \underline{\phi}_1 \cdots \ \underline{\phi}_{i-1} \ \underline{\phi}_{i+1} \cdots \ \underline{\phi}_{\alpha+1} \right] .\eeq
However, the matrix to the right is non-singular by construction and hence the data collector can obtain
\beq \left\lbrace \underline{\phi}_i^t  S_{1} ~~|~~ 1\leq i \leq \alpha+1\right\rbrace .  \eeq
Selecting the first $\alpha$ of these, the data collector has access to
\beq  \left[\begin{tabular}{>{$}c<{$}}
\underline{\phi}^t_1 \\
\vdots \\
\underline{\phi}^t_{\alpha}
\end{tabular}\right] S_{1}~. \eeq
The matrix on the left is also non-singular by construction and hence the data collector can recover $S_{1}$. Similarly, using the values of the non-diagonal elements of $Q$, the data collector can recover $S_{2}$.
\end{IEEEproof}

\begin{remark}It is shown in~\cite{ourITW,ourMISERjournal} that \textit{interference alignment} is, in fact, a necessary ingredient of any minimum storage regenerating code. Interference alignment is also present in the product-matrix MSR code, and Appendix~\ref{app:ia} brings out this connection. 
\end{remark}

\subsection{An Example for the Product-Matrix MSR code}\label{sec:MSR_example}
Let $n=6, \ k=3, \ d=4$. Then $\alpha=d-k+1=2$ and $B =k \alpha =  6$. Let us choose $q=13$, so we are operating over $\mathbb{F}_{13}$. The matrices $S_1$ and $S_2$ are filled up by the $6$ message symbols $\{u_i\}_{i=1}^6$ as shown below:

\beq S_1= \begin{bmatrix}
      u_1 & u_2  \\
           u_2 & u_3
     \end{bmatrix}, S_2 = \begin{bmatrix}
      u_4 & u_5  \\
           u_5 & u_6
     \end{bmatrix}, \eeq
so that the message matrix $M$ is given by
\beq M= \begin{bmatrix}
         u_1 & u_2  \\
           u_2 & u_3 \\
    u_4 & u_5  \\
           u_5 & u_6
        \end{bmatrix}. \eeq
We choose $\Psi$ to be the $(6 \times 4)$ Vandermonde matrix over $\mathbb{F}_{13}$ given by \beq \Psi= \begin{bmatrix}
            1 & 1 & 1 & 1\\
            1 & 2 & 4 & 8\\
            1 & 3 & 9 & 1\\
            1 & 4 & 3 & 12\\
            1 & 5 & 12 & 8\\
            1 & 6 & 10 & 8
           \end{bmatrix}.  \eeq
Hence the $(6 \times 2)$ matrix $\Phi$ and the $(6 \times 6)$ diagonal matrix $\Lambda$ are
\beq \Phi=\begin{bmatrix}
1 & 1 \\
 1 & 2 \\
1 & 3 \\
1 & 4 \\
1 & 5 \\
1 & 6
\end{bmatrix}, \quad \Lambda=\begin{bmatrix}
1 & & & & & \\
& 4 & & & &\\
& & 9 & & &\\
& & & 3 & &\\
& & & & 12 &\\
& & & & & 10
\end{bmatrix}. \eeq
Fig.~\ref{fig:MSR_example} shows at the top, the $(6 \times 2)$ code matrix $C=\Psi M$ with entries expressed as functions of the message symbols $\{u_i \}$.  The rest of the figure explain how exact-regeneration of failed node $1$ takes place. To regenerate node node $1$, the helper nodes (nodes $2,4,5,6$ in the example), pass on their respective inner products $\underline{\psi}_{\ell}^t M [1 \ 1]^t$ for $\ell=2,4,5,6$.  The replacement node multiplies the symbols it receives with $\Psi_{\text{repair}}^{-1}$, where
           \beq
           \Psi_{\text{repair}} \ = \
           \begin{bmatrix}
            1 & 2 & 4 & 8\\
            1 & 4 & 3 & 12\\
            1 & 5 & 12 & 8\\
            1 & 6 & 10 & 8
           \end{bmatrix},  \eeq
and decodes $S_1 \underline{\psi}_1$ and $S_2 \underline{\psi}_1$:
\beq S_1 \underline{\psi}_1 = \begin{bmatrix}
u_1+u_2 \\
u_2+ u_3
\end{bmatrix}, S_2 \underline{\psi}_1 = \begin{bmatrix}
u_4+u_5 \\
u_5+ u_6
\end{bmatrix}. \eeq
Finally, it processes $S_1 \underline{\psi}_1$ and $S_2 \underline{\psi}_1$ to obtain the data stored in the failed node as explained in the proof of Theorem~\ref{thm:MSR_regen} above.

\begin{figure*}[t]
% \hspace{20pt}
\centering
\includegraphics[trim=0.4in 5.5in 4.5in 0.5in, clip, width=.85\textwidth]{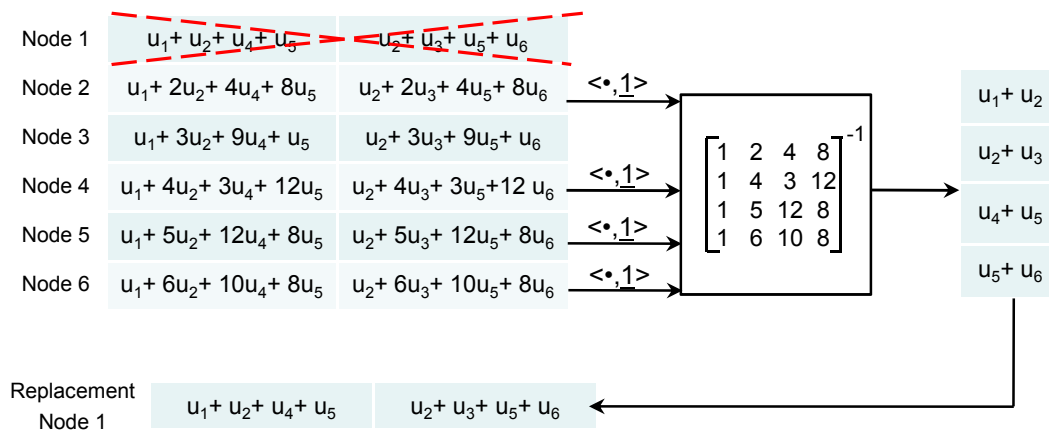} %l b r t
\caption{An example for the MSR code construction: On failure of node $1$, the replacement node downloads one symbol each from nodes $2, \ 4, \ 5$ and $6$, using which node $1$ is exactly regenerated. The notation $<.,\underline{1}>$ indicates an inner product of the stored symbols with the vector $[1~1]^t$.} \label{fig:MSR_example}
% \vspace{-10pt}
\end{figure*}

\subsection{Systematic Version of the Code}\label{subsec:MSR_sys}

It was pointed out in Section~\ref{sec:prod_mx}, that every exact-regenerating code has a systematic version and further, that the code could be made systematic through a process of message-symbol remapping. In the following, we make this more explicit in the context of the product-matrix MSR code.

Let $\Psi_k$ be the $(k \times d)$ submatrix of $\Psi$, containing the $k$ rows of $\Psi$ corresponding to the $k$ nodes which are chosen to be made systematic. The set of $k\alpha$ symbols stored in these $k$ nodes are given by the elements of the $(k \times \alpha)$ matrix $\Psi_k M$. Let $U$ be a $(k \times \alpha)$ matrix containing the $B=k\alpha$ source symbols.  We map \beq \Psi_k M = U~, \eeq and solve for the entries of $M$ in terms of the symbols in $U$.  This is precisely the data-reconstruction process that takes place when a data collector connects to the $k$ chosen nodes. Thus, the value of the entries in $M$ can be obtained by following the procedure outlined in Theorem~\ref{thm:MSR_recon}. Then, use this $M$ to obtain the code $C = \Psi M$. Clearly, in this representation, the $k$ chosen nodes store the source symbols $U$ in uncoded form.

\subsection{Explicit MSR Product-Matrix Codes for $d \geq 2k-2$} \label{sec:MSR_extension}

In this section, we show how an MSR code for $d=2k-2$ can be used to
obtain MSR codes for all $d\geq 2k-2$.  Our starting point is the
following theorem.

%TODO: say that the given code C is linear, in each thm below
\begin{thm}
\label{thm:MSR_higher_d} An explicit $[n'=n+1,~k'=k+1,~d'=d+1]$ exact-regenerating code ${\cal C}'$ that achieves the cut-set bound at the MSR point can be used to construct an explicit $[n,\ k, \ d]$ exact-regenerating code ${\cal C}$ that also achieves the cut-set bound at the MSR point. Furthermore if $d'=a k'+b$ in code ${\cal C}'$, $d=a k+b+(a-1)$ in code ${\cal C}$.   If ${\cal C}'$ is linear, so is ${\cal C}$.
\end{thm}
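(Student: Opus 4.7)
The plan is a \emph{shortening-and-puncturing} reduction: delete one of the $n+1$ nodes of $\mathcal{C}'$ after forcing its stored vector to become identically zero. First note that the parameters line up: at the MSR point with $\beta = 1$, both codes have per-node storage $\alpha' = d'-k'+1 = d-k+1 = \alpha$, and the message sizes satisfy $B' = k'\alpha = (k+1)\alpha = k\alpha + \alpha = B + \alpha$. So we need to remove exactly one storage node and exactly $\alpha$ dimensions of message.

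Designate some node of $\mathcal{C}'$ as node $(n+1)$ and invoke Theorem~1 to bring $\mathcal{C}'$ into an equivalent systematic form in which $(n+1)$ is a systematic node, so that $\alpha$ of the $B'$ message symbols are stored verbatim at $(n+1)$. Restricting those $\alpha$ systematic symbols to be zero yields a subcode in which node $(n+1)$ stores the all-zero vector while the remaining $B = B' - \alpha = k\alpha$ message symbols are free. Defining $\mathcal{C}$ to be this subcode after puncturing out node $(n+1)$ gives a code on $n$ storage nodes with per-node storage $\alpha$ and total message $B = k(d-k+1)$---precisely the MSR parameters for $[n,k,d]$. (One may alternatively sidestep Theorem~1 by a direct rank argument: since $\mathcal{C}'$ reconstructs from any $k'$ nodes while any $k$ nodes leave ambiguity of exactly $B' - k\alpha = \alpha$ dimensions, the encoding map from message to node $(n+1)$'s stored vector must be surjective, hence has a codimension-$\alpha$ kernel.)

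Data-reconstruction and exact-regeneration of $\mathcal{C}$ then follow by ``borrowing'' node $(n+1)$ as a phantom zero-valued node. A data collector accessing any $k$ nodes of $\mathcal{C}$ adjoins the phantom node $(n+1)$ to form a set of $k+1 = k'$ nodes of $\mathcal{C}'$ and invokes $\mathcal{C}'$'s reconstruction procedure to recover the $B$ free message symbols. Similarly, a replacement for a failed node $f \in \{1,\ldots,n\}$ contacts any $d$ surviving nodes of $\mathcal{C}$ and applies $\mathcal{C}'$'s regeneration procedure with these $d$ helpers together with phantom helper $(n+1)$---whose contribution is zero with no actual transmission---as the $(d+1)$-st helper; this exactly regenerates $f$. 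The per-node download remains $\beta = 1$. The arithmetic relation is immediate: $d = d' - 1 = a(k+1) + b - 1 = ak + b + (a-1)$, and linearity of $\mathcal{C}$ is inherited because both the shortening and the puncturing are linear operations. The only non-trivial ingredient is the shortening step---exhibiting a systematic coordinate at node $(n+1)$---which is the main (and mild) obstacle, resolved by Theorem~1 or by the dimension argument above; everything else is a routine check.
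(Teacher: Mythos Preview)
Your proposal is correct and is essentially identical to the paper's own proof: the paper also puts $\mathcal{C}'$ into systematic form (via Theorem~1), shortens by setting the $\alpha$ message symbols of one systematic node to zero, and then punctures that node, recovering data-reconstruction and exact-regeneration by adjoining the now-zero node as a phantom. The only differences are cosmetic (the paper places the deleted node first rather than last, and does not spell out the alternative surjectivity argument you mention).
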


\begin{IEEEproof}
If both codes operate at the MSR point, then the number of message symbols $B',B$ in the two cases must satisfy \bean  B'
\ = \  k'(d'-k'+1) \ \ \text{  and } \ \ B \ = \ k(d-k+1) \eean respectively, so that
\[
B'-B \ = \ d-k+1 = \alpha~.
\]
We begin by constructing an MSR-point-optimal $[n', ~k',~ d']$ exact-regenerating code ${\cal C}'$ in systematic form with the first $k'$ rows containing the $B'$ message symbols.  Let ${\cal C}''$ be the subcode of ${\cal C}'$ consisting of all code matrices in ${\cal C}'$ whose top row is the all-zero row (i.e., the first $\alpha$ of the $B'$ message symbols are all zero).  Clearly, the subcode ${\cal C}''$ is of size $q^{B'-\alpha} \ = \ q^B $.  Note that ${\cal C}''$ also possesses the same exact-regeneration and data-reconstruction properties as does the parent code ${\cal C}'$.

%The code ${\cal C}'$ is comprised of $q^B$ code matrices each of size $(n+1 \times \alpha)$ and we can assume without loss of generality that the first rows of all code matrices are identical and that this common row $\underline{\rho}$ contains the $\alpha$ all-zero fictitious message symbols.

Let the code ${\cal C}$ now be formed from subcode ${\cal C}''$ by puncturing (i.e., deleting) the first row in each code matrix of ${\cal C}''$.
Clearly, code ${\cal C}$ is also of size $q^B$.  We claim that ${\cal C}$ is an $[n,k,d]$ exact-regenerating code. The data-reconstruction requirement requires that the $B$ underlying message symbols be recoverable from the contents of any $k$ rows of a code matrix $C$ in ${\cal C}$.  But this follows since, by augmenting the matrices of code ${\cal C}$ by placing at the top an additional all-zero row, we obtain a code matrix in ${\cal C}''$ and code ${\cal C}''$ has the property that the data can be recovered from any $(k+1)$ rows of each code matrix in ${\cal C}''$.  A similar argument shows that code ${\cal C}$ also possesses the exact-regeneration property.  Clearly if ${\cal C}'$ is linear, so is code ${\cal C}$.  Finally, we have
\bean
d' & = & a k'+b \\
\ \Rightarrow \ d+1 & = & a(k+1) + b \\
\ \Rightarrow \ d & = & a k + b + (a-1)~. \eean
\end{IEEEproof}

By iterating the procedure in the proof of Theorem~\ref{thm:MSR_higher_d} above $i$ times we obtain:

\begin{cor}
\label{cor:MSR_higher_d} An explicit $[n'=n+i,~k'=k+i,~d'=d+i]$
exact-regenerating code ${\cal C}'$ that achieves the cut-set bound at the MSR point can be used to construct an explicit
$[n,\ k, \ d]$ exact-regenerating code ${\cal C}$ that also achieves the cut-set bound at the MSR point.
Furthermore if $d'=a k'+b$ in code ${\cal C}'$, $d=a k+b+i(a-1)$ in code
${\cal C}$.   If ${\cal C}'$ is linear, so is ${\cal C}$.
\end{cor}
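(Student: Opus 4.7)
The plan is a direct induction on $i$, invoking Theorem~\ref{thm:MSR_higher_d} once at each step. The base case $i=0$ is vacuous: set ${\cal C} := {\cal C}'$, and the claimed parameter relation collapses to $d = ak + b + 0\cdot(a-1)$, which is the hypothesis on ${\cal C}'$.

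For the inductive step, suppose the corollary is established for $i-1$. Given an explicit MSR-optimal $[n+i,\,k+i,\,d+i]$ code ${\cal C}'$, I will first apply Theorem~\ref{thm:MSR_higher_d} once to obtain an explicit MSR-optimal intermediate code ${\cal C}^{(1)}$ with parameters $[n+i-1,\,k+i-1,\,d+i-1]$. Writing the hypothesis as $d+i = a(k+i)+b$, Theorem~\ref{thm:MSR_higher_d} yields $d+i-1 = a(k+i-1) + (b + a - 1)$, so ${\cal C}^{(1)}$ satisfies the same form of relation but with shifted remainder $b_1 := b + (a-1)$. I would then invoke the inductive hypothesis on ${\cal C}^{(1)}$ for $i-1$ further iterations, using the constants $(a,\,b_1)$ in place of $(a,\,b)$, to produce the required $[n,\,k,\,d]$ code with $d = ak + b_1 + (i-1)(a-1) = ak + b + i(a-1)$.

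Explicitness, linearity, and the MSR-optimality of the exact-regeneration code are each preserved in a single application of Theorem~\ref{thm:MSR_higher_d}, so all three properties propagate through the $i$ iterations. I do not anticipate any real obstacle: the construction is just iterated one-step puncturing as in the proof of Theorem~\ref{thm:MSR_higher_d}, and the only delicate point is keeping the quotient $a$ fixed across all iterations so that the per-step increments of $(a-1)$ to the remainder accumulate telescopically to the claimed $i(a-1)$.
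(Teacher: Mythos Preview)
Your proposal is correct and follows exactly the paper's approach: the paper simply states that the corollary follows by iterating the procedure in the proof of Theorem~\ref{thm:MSR_higher_d} a total of $i$ times, and your induction is a careful spelling-out of precisely that iteration, including the bookkeeping that keeps $a$ fixed and accumulates the per-step increment of $(a-1)$.
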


The corollary below follows from Corollary~\ref{cor:MSR_higher_d} above.
\begin{cor}
\label{cor:MSR_higher_d_2} An MSR-point optimal exact-regenerating code
${\cal C}$ with parameters $[n,k,d]$ for any $2k-2 \leq d \leq n-1$
can be constructed from an MSR-point optimal exact-regenerating
$[n'=n+i,k'=k+i,d'=d+i]$ code ${\cal C}'$ with $d'=2k'-2$ and
$i=d-2k+2$. If ${\cal C}'$ is linear, so is ${\cal C}$.
\end{cor}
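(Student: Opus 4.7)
The plan is to obtain Corollary~\ref{cor:MSR_higher_d_2} as a direct specialization of Corollary~\ref{cor:MSR_higher_d}, choosing the iteration depth $i$ so that the outer code ${\cal C}'$ lands precisely on the base case $d' = 2k' - 2$ already handled by the explicit product-matrix construction of Section~\ref{sec:MSR_prodmx}. Under the parameter map $n' = n + i$, $k' = k + i$, $d' = d + i$ supplied by the corollary, the requirement $d' = 2k' - 2$ becomes $d + i = 2(k + i) - 2$, which forces $i = d - 2k + 2$; this is exactly the value named in the statement.

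Next I would check that the parameter regime $2k - 2 \leq d \leq n - 1$ is compatible with this choice. Non-negativity of $i$ is equivalent to $d \geq 2k - 2$, which is the first hypothesis, and $d \leq n - 1$ gives $d' = d + i \leq n + i - 1 = n' - 1$, so ${\cal C}'$ remains in the feasible MSR parameter range. To make the parameter accounting fully transparent, I would specialize the identity $d = ak + b + i(a-1)$ from Corollary~\ref{cor:MSR_higher_d} to $(a,b) = (2,-2)$, which reads $d = 2k - 2 + i$ and agrees with the defining relation $i = d - 2k + 2$.

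Having reconciled the parameters, I would invoke Corollary~\ref{cor:MSR_higher_d} to produce the explicit $[n,k,d]$ MSR-point optimal exact-regenerating code ${\cal C}$ from the explicit $[n',k',d']$ code ${\cal C}'$ supplied by the product-matrix MSR construction of Section~\ref{sec:MSR_prodmx}; the same corollary transports linearity from ${\cal C}'$ to ${\cal C}$. The degenerate case $i = 0$ (i.e., $d = 2k - 2$) is automatic, since no puncturing is performed and ${\cal C} = {\cal C}'$ is already the base-case construction.

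There is no substantial obstacle here: the entire argument is a parameter-matching exercise that reduces an arbitrary admissible $d \geq 2k - 2$ to the base case $d = 2k - 2$, using the iterated subcode-and-puncture operation already established in Theorem~\ref{thm:MSR_higher_d} and Corollary~\ref{cor:MSR_higher_d}. The only item meriting care is confirming that the base code ${\cal C}'$ chosen in Section~\ref{sec:MSR_prodmx} is actually MSR-optimal and exact-regenerating at $[n', k', d']$ with $d' = 2k' - 2$, which is the content of Theorems~\ref{thm:MSR_regen} and~\ref{thm:MSR_recon}.
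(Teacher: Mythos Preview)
Your proposal is correct and follows exactly the route the paper intends: the paper simply asserts that Corollary~\ref{cor:MSR_higher_d_2} follows from Corollary~\ref{cor:MSR_higher_d}, and your argument spells out the parameter-matching (choosing $i=d-2k+2$ so that $d'=2k'-2$, then verifying feasibility and invoking the base construction of Section~\ref{sec:MSR_prodmx}) that makes this immediate.
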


\section{Analysis and Advantages of the Codes} \label{sec:sys_adv}
In this section, we detail the system-implementation advantages of
the two code constructions presented in the paper.

\subsection{Reduced Overhead}
In the product-matrix based constructions provided, the data stored in the $i^{th}$ storage node in the system is completely determined by the single encoding vector $\underline{\psi}_i$ of length $d$. This is in contrast to a $(B \times \alpha)$ generator matrix in a general code, comprising of the $\alpha$ global kernels of length $B$, each associated to a different symbol stored in the node. The encoding vector suffices for the encoding, data-reconstruction, and regeneration purposes. The short length of the encoding vector reduces the overhead associated with the need for nodes to communicate their encoding vectors to the data collector during data-reconstruction, and to the replacement node during regeneration of a failed node.

Also, in both MBR and MSR code constructions, during regeneration of a failed node, the information passed on to the replacement node by a helper node is only a function of the index of the failed node. Thus, it is independent of the identity of the $d-1$ other nodes that are participating in the regeneration. Once again, this reduces the communication overhead by requiring less information to be disseminated.

\subsection{Applicability to Arbitrary $n$}
In any real-world distributed storage application such as peer-to-peer storage, cloud storage, etc, it is natural that the number of nodes may go up or down: in due course of time, new nodes may be added to the system, or multiple nodes may fail or exit the system. For example, in peer-to-peer systems, individual nodes are free to come and go at will. The existing, explicit constructions of exact-regenerating codes \cite{ourAllerton,DimSearch,WuDimISIT,ourITW,Changho} restrict the value of $n$ to be $d+1$. On the other hand, the codes presented in this paper are applicable for all values of $n$, and independent of the values of the parameters $k$ and $d$. This gives a practical appeal to the code constructions presented here. 

\subsection{Complexity}\label{subsec:complexity}
\subsubsection{Linearity and Field Size} The codes are linear over a chosen finite field $\mathbb{F}_q$, i.e., the source symbols are from this finite field, and any stored symbol is a linear combination of these symbols over $\mathbb{F}_q$.
%Such a setup makes the code easy to implement.
To arrive at the product-matrix MBR code, any field of size $2n$ or higher suffices, and for the product-matrix MSR code, any field of size $n^2$ or higher suffices. By cleverly choosing the matrix $\Psi$ that meets the conditions governing the respective codes, it may often be possible to reduce the field size even further.

\subsubsection{Striping} The codes presented here divide the entire message into \textit{stripes} of sizes corresponding to $\beta=1$. Since each stripe is of minimal size, the complexity of encoding, data-reconstruction and regeneration operations, are considerably lowered, and so are the buffer sizes required at data collectors and replacement nodes. Furthermore, the operations that need to be performed on each stripe are identical and independent, and hence can be performed in parallel efficiently by a GPU/FPGA/multi-core processor.

\subsubsection{Choice of the Encoding Matrix $\Psi$}
The encoding matrix $\Psi$, for both the codes described, can be chosen as a Vandermonde matrix. Then each encoding vector can be described by just a scalar. Moreover with this choice, the encoding, data-reconstruction, and regeneration operations are, for the most part, identical to encoding or decoding of conventional Reed-Solomon
codes.

\section{Conclusions}\label{sec:conclusion}
In this paper, an explicit MBR code for all values of the system parameters $[n, \ k, \ d]$, and an explicit MSR code for all parameters satisfying $[n,~k,~d \geq 2k-2]$ are presented. Both constructions are based on a common product-matrix framework introduced in this paper, and possess attributes that make them attractive from an implementation standpoint. To the best of our knowledge, these are the first explicit constructions of exact-regenerating codes that allow $n$ to take any value independent of the other parameters; this results in a host of desirable properties such as the ability to optimally handle multiple simultaneous, node failures as well as the ability of  allowing the total number of storage nodes in the system to vary with time. Our results also prove that the MBR point on the storage-repair bandwidth tradeoff is achievable under the additional constraint of exact-regeneration for all values of the system parameters, and that the MSR point is achievable under exact-regeneration for all $d \geq 2k-2$.

\bibliographystyle{IEEEtran}
\bibliography{../bibtex/distributedStorage,distributedStorage}

\appendices

\section{Description of a Previously Constructed MSR Code in the Product-Matrix Framework}\label{app:MISER_prodmx}

A code structure that guarantees exact-regeneration of just the systematic nodes is provided in~\cite{ourITW}, for the MSR point with parameters $[n=d+1,~k,~d\geq 2k-1]$.  Subsequently, it was shown in~\cite{Changho} that for this set of parameters, the code introduced in~\cite{ourITW} for exact-regeneration of only the systematic nodes can also be used for exact-regeneration of the non-systematic (parity) nodes, provided repair construction schemes are appropriately designed. Such an explicit repair scheme is indeed designed and presented in \cite{Changho}. In this section, we provide a simpler description of this code in the product-matrix framework.

As in~\cite{ourITW,Changho}, we begin with the case $d=2k-1$, since the code as well as both data-reconstruction and exact-regeneration algorithms can be extended to larger values of $d$ by making use of Corollary~\ref{cor:MSR_higher_d_2}.

At the MSR point, with $d=n-1=2k-1$, we have from equations (\ref{eq:MSR_alpha_beta1}) and (\ref{eq:MSR_B_beta1}) that
\bea \alpha &=& d-k+1 \ = \ k~, \\
 B &=& k \alpha = k^2~.\eea
Let $S$ be a $(k \times k)$ matrix whose entries are precisely the $B$ message symbols $\{u_i\}_{i=1}^B$ and let $M$ be the $(2k \times k)$ message matrix\footnote{Note that the constructions presented in Sections~\ref{sec:MBR_prodmx} and~\ref{sec:MSR_prodmx} employ a $(d \times \alpha)$ matrix $M$ as the message matrix, whereas the dimension of $M$ in the present construction is $((d+1) \times \alpha)$.}
given by:  \beq M \ = \
\begin{bmatrix}
S  \\
S^t
\end{bmatrix}. \eeq
Next, let $\Phi$ be a $(k \times k)$ Cauchy matrix over $\mathbb{F}_q$ and $\rho$ a scalar chosen such that \bea \rho \neq 0, \quad \rho^2 \neq 1~. \eea   Let $\Psi$ be the $(n \times 2k)$ encoding matrix given by \bea \Psi =
\begin{bmatrix}
I & 0 \\
\Phi & \rho \Phi \\
\end{bmatrix} . \eea

The code constructed in~\cite{ourITW,Changho} can be verified to have an alternate description  as the collection of code matrices of the form \beq
C \ = \Psi M =
\begin{bmatrix}
S \\
\Phi(S +\rho S^t)
\end{bmatrix}. \eeq
Note that the first $k$ nodes store the message symbols in uncoded form and hence correspond to the systematic nodes.  A simple description of the exact-regeneration and data-reconstruction properties of the code is presented below.

\begin{thm}[Exact-Regeneration] In the code presented, exact-regeneration of any failed node can be achieved by connecting to the remaining $n-1$ nodes.
\end{thm}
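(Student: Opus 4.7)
I would split the proof into two cases: $f$ is a systematic node ($1 \le f \le k$) or a parity node ($f = k+i$, $1 \le i \le k$). In each case I follow the product-matrix regeneration template of Section~\ref{sec:prod_mx}: the replacement node contacts all $d = n-1 = 2k-1$ remaining nodes, and each helper $h_j$ returns the scalar $\underline{\psi}_{h_j}^t M \underline{\mu}_f$. A natural choice of $\underline{\mu}_f$ is the ``first-half'' sub-vector of $\underline{\psi}_f$, namely $\underline{\mu}_f = e_f$ in the systematic case and $\underline{\mu}_f = \phi_i$ in the parity case. The Cauchy property of $\Phi$ will be used repeatedly to invert any $(k \times k)$ submatrix of its rows, and the conditions $\rho \neq 0$ and $\rho^2 \neq 1$ (so $\rho \neq \pm 1$) will enter at the clean-up stage of each case.

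\textbf{Systematic case.} Since $M e_f = \bigl[S_{*,f}^t,\; S_{f,*}\bigr]^t$, a systematic helper $j \neq f$ delivers exactly the scalar $S_{j,f}$, while the $k$ parity helpers collectively deliver the vector $\Phi\bigl(S_{*,f} + \rho\, S_{f,*}^t\bigr)$. Left-multiplying by $\Phi^{-1}$ yields $v := S_{*,f} + \rho\, S_{f,*}^t$; for every $j \neq f$ the known value $S_{j,f}$ then gives $S_{f,j} = (v_j - S_{j,f})/\rho$ (using $\rho \neq 0$), while $v_f = (1+\rho) S_{f,f}$ gives $S_{f,f}$ (using $\rho \neq -1$). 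This reconstructs the row $S_{f,*}$ stored at node $f$.

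\textbf{Parity case and main obstacle.} Now $M \phi_i = \bigl[(S\phi_i)^t,\; (S^t\phi_i)^t\bigr]^t$. The $k$ systematic helpers collectively deliver $S\phi_i$ entry-by-entry, and each non-failed parity helper $k+j$ ($j \neq i$) delivers $\phi_j^t S \phi_i + \rho\, \phi_j^t S^t \phi_i$; subtracting the now-known $\phi_j^t (S\phi_i)$ and dividing by $\rho$ isolates the $k-1$ projections $\{\phi_j^t S^t \phi_i : j \neq i\}$. The hard part is that these span only a $(k-1)$-dimensional subspace of $\mathbb{F}_q^k$, leaving $S^t \phi_i$ under-determined by exactly one scalar. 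The step that unlocks the argument is the scalar identity
\[
\phi_i^t S^t \phi_i \;=\; (\phi_i^t S \phi_i)^t \;=\; \phi_i^t (S\phi_i),
\]
whose right-hand side is already computable from the systematic helpers' output; this supplies the missing $i$-th projection. Inverting $\Phi$ then yields $S^t\phi_i$, and the failed-node data is assembled as $\phi_i^t S + \rho\, \phi_i^t S^t = (S^t\phi_i + \rho\, S\phi_i)^t$. I expect this last trick --- recovering the ``diagonal'' quadratic-form projection for free from $S\phi_i$ --- to be the conceptual crux of the proof, playing a role analogous to that of the symmetry of $M$ in the MBR construction of Section~\ref{sec:MBR_prodmx}.
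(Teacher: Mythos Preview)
Your proposal is correct and follows essentially the same approach as the paper: the same case split, the same choice $\underline{\mu}_f = e_f$ or $\underline{\mu}_f = \phi_i$, the same inversion of the Cauchy block, and in the parity case exactly the same ``free'' scalar $\phi_i^t S^t \phi_i = \phi_i^t S \phi_i$ to supply the missing projection. The only stylistic difference is that in the systematic case the paper packages the recovery as inverting a single $(2k\times 2k)$ block matrix (non-singular when $\rho^2\neq 1$), whereas you work entry-by-entry using $\rho\neq 0$ and $\rho\neq -1$; your route is slightly more elementary but the content is the same.
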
 \label{thm:MISER_regen}
\begin{IEEEproof}
In this construction, the vector $\underline{\mu}_f$ used in the exact-regeneration of a failed node $f$ is composed of the first $k=\alpha$ symbols of $\underline{\psi}_f$.
\paragraph{Exact-regeneration of systematic nodes}
Consider regeneration of the $i^{th}$ systematic node. The $k$ symbols thus desired by the replacement node are $\underline{e}_i^t S$. The replacement node obtains the following $n-1$ symbols from the remaining nodes: \beq \begin{bmatrix}
\tilde{I} & 0 \\
\Phi & \rho \Phi \\
\end{bmatrix} \begin{bmatrix}
S \\
S^t
\end{bmatrix} \underline{e}_i = \begin{bmatrix}
\tilde{I}S\underline{e}_i \\
\Phi ({S}+\rho {S^t})\underline{e}_i
\end{bmatrix}~. \label{eq:miser1}\eeq
where $\tilde{I}$ is a $((k-1) \times k)$ matrix which is the identity matrix with $i^{th}$ row removed. Since $\Phi$ is full rank by construction, the replacement node has access to \bea &&\left[({S}+\rho {S^t})\underline{e}_i\right] \nonumber\\ &&=\left[\rho\underline{e}_i^t{S}+\underline{e}_i^t {S^t}\right]~.\label{eq:miser2} \eea 
From~\eqref{eq:miser1} and~\eqref{eq:miser2}, we see that the replacement node has access to
\beq \begin{bmatrix}
\tilde{I} & 0 \\
\Phi & \rho \Phi \\
\rho\underline{e}_i^t & \underline{e}_i^t
\end{bmatrix} \begin{bmatrix}
S \\
S^t
\end{bmatrix} \underline{e}_i~. \eeq
Since $\rho \neq 1$, the $(2k \times 2k)$ matrix on the left is non-singular. This allows the replacement node to recover the symbols $S^t \underline{e}_i$, which are precisely the set of symbols $\underline{e}_i^t S$ desired.

\paragraph{Exact-regeneration of non-systematic nodes}
Let $\underline{\phi}_f^t$ be the row of $\Phi$ corresponding to the failed node. Then the $k$ symbols stored in the failed node are $\underline{\phi}_f^t(S +\rho S^t)$. The replacement node requests and obtains the following $n-1$ symbols from the remaining nodes:
\beq \begin{bmatrix}
      I & 0 \\
\Phi_{k-1} & \rho \Phi_{k-1} \\
     \end{bmatrix} \begin{bmatrix}
S \\
S^t
\end{bmatrix} \underline{\phi}_f~, \eeq
where $\Phi_{k-1}$ is the submatrix of $\Phi$ containing the $k-1$ rows corresponding to the remaining non-systematic nodes. This gives  the replacement node access to  $S \underline{\phi}_f$ and therefore to \beq (S \underline{\phi}_f)^t \underline{\phi}_f = \underline{\phi}_f^t
S^t \underline{\phi}_f~. \eeq Hence the replacement node has access to
\beq \begin{bmatrix}
I & 0 \\
\Phi_{k-1} & \rho \Phi_{k-1} \\
\underline{0}^{t} & \underline{\phi}_f^t
\end{bmatrix} \begin{bmatrix}
S \\
S^t
\end{bmatrix} \underline{\phi}_f~. \eeq
The matrix on the left is easily verified to be non-singular and thus the replacement node acquires $S \underline{\phi}_f$ and $S^t \underline{\phi}_f$ individually from which it can derive the desired vector $(\underline{\phi}_f^t S + \rho \underline{\phi}_f^t S^t)$.
\end{IEEEproof}

\begin{thm}[Data-Reconstruction]\label{thm:MISER_recon}
In the code presented, all the $B$ message symbols can be recovered by connecting to any $k$ nodes, i.e., the message symbols can be recovered through linear operations on the entries of any $k$ rows of the matrix $C$.
\end{thm}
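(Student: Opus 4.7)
The plan is to parametrize the data collector's access pattern by splitting its $k$ contacted nodes into $k_1$ systematic nodes indexed by $\mathcal A \subseteq \{1,\dots,k\}$ and $k_2 = k-k_1$ parity nodes indexed by $\mathcal B \subseteq \{1,\dots,k\}$. The collector reads off the rows $S[\mathcal A,:]$ directly from the systematic contribution, and obtains the $k_2 \times k$ matrix $Y := \Phi[\mathcal B,:](S + \rho S^t)$ from the parity contribution. The remaining unknowns in $S$, namely the block $S[\bar{\mathcal A},:]$ of size $k_2\times k$, I partition into $T_1 := S[\bar{\mathcal A},\mathcal A]$ (size $k_2\times k_1$) and $T_2 := S[\bar{\mathcal A},\bar{\mathcal A}]$ (size $k_2\times k_2$), to be recovered in that order.

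For $T_1$, I would examine the columns of $Y$ indexed by $j \in \mathcal A$. In the $(b,j)$-entry of $Y$, the row $S_{j,\cdot}$ appearing inside $\rho\,\Phi[\mathcal B,:]\,S^t$ is entirely known because $j \in \mathcal A$, and the rows $S[\mathcal A,:]$ appearing inside $\Phi[\mathcal B,:]\,S$ are also known. Subtracting both yields the linear system $\Phi[\mathcal B,\bar{\mathcal A}]\,T_1 = (\text{known})$, and the Cauchy property of $\Phi$ makes the coefficient matrix invertible, so $T_1$ is uniquely determined.

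For $T_2$, I then examine the columns of $Y$ indexed by $j \in \bar{\mathcal A}$. Now the $S_{\ell,j}$ with $\ell\in \mathcal A$ are known from $S[\mathcal A,:]$, and the $S_{j,\ell}$ with $\ell\in \mathcal A$ are known from the just-recovered $T_1$. After cancelling these, the residual equation is
\[
\Phi[\mathcal B,\bar{\mathcal A}]\,T_2 \;+\; \rho\,\Phi[\mathcal B,\bar{\mathcal A}]\,T_2^{t} \;=\; V
\]
for a known $k_2\times k_2$ matrix $V$. Premultiplying by the inverse of the Cauchy submatrix gives $W := T_2 + \rho T_2^t$, whose transpose furnishes the companion equation $W^t = T_2^t + \rho T_2$. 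Eliminating $T_2^t$ from these two equations yields $(1-\rho^2)T_2 = W - \rho W^t$, and since $\rho^2\neq 1$ by hypothesis, $T_2$ is uniquely determined. The boundary cases $k_1 = k$ (trivial) and $k_1 = 0$ (only the second stage runs, with $\Phi[\mathcal B,:]$ itself a full Cauchy matrix and therefore invertible) are immediate specialisations.

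The main obstacle is the second stage, where the unknown block $T_2$ appears coupled with its transpose; this is the same transposition-coupling that arose when regenerating systematic nodes in the proof of Theorem~\ref{thm:MISER_regen}, and is again resolved only by the hypothesis $\rho^2 \neq 1$. Everything else is bookkeeping enabled by the Cauchy choice of $\Phi$, which guarantees that every square submatrix $\Phi[\mathcal B,\bar{\mathcal A}]$ that must be inverted is nonsingular for every choice of $\mathcal A$ and $\mathcal B$.
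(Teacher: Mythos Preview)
Your proof is correct and follows essentially the same route as the paper's own argument: partition the contacted nodes into systematic indices $P=\mathcal A$ and parity indices $Q=\mathcal B$, use the columns of the parity data indexed by $\mathcal A$ to recover $S_{(\bar{\mathcal A},\mathcal A)}$ via the invertible Cauchy submatrix, then use the remaining columns to isolate $S_{(\bar{\mathcal A},\bar{\mathcal A})}+\rho S_{(\bar{\mathcal A},\bar{\mathcal A})}^t$ and decouple using $\rho^2\neq 1$. The only cosmetic difference is that the paper handles the final decoupling entrywise (diagonal versus off-diagonal), whereas your matrix identity $(1-\rho^2)T_2 = W-\rho W^t$ packages both cases at once; the content is identical.
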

\begin{IEEEproof}
We first introduce the following notation to denote submatrices of a matrix. If $A$ is an $(m_1 \times m_2)$ matrix and $P,~Q$ are arbitrary subsets of $\{1,\ldots,m_1\}$ and $\{1,\ldots,m_2\}$ respectively, we will use $A_{(P,Q)}$ to denote the submatrix of $A$ containing only the rows and columns respectively specified by the indices in $P$ and $Q$. For the cases when either $P=\{1,\ldots,m_1\}$ or $Q=\{1,\ldots,m_2\}$, we will simply indicate this as `all'.

Let $P=\{n_1, \ldots, n_i\}$ and $Q=\{m_1, \ldots, m_{(k-i)}\}$ be the systematic and non-systematic nodes respectively to which the data collector connects.  Let $T=\{1,\ldots,k\} \backslash P$, i.e., the systematic nodes to which the data collector does \textit{not} connect. Then the data collector is able to access the $k \alpha$ symbols
\beq \begin{bmatrix}
S_{(P,\text{all})} \\
\Phi_{(Q,\text{all})}(S +\rho S^t) \\
\end{bmatrix}. \eeq
Thus the data collector has access to the $i$ rows of $S$ indexed by the entries of $P$ and consequently, has access to the corresponding columns of $S^t$ as well.

Consider the $i$ columns of $\Phi_{(Q,\text{all})}(S +\rho S^t)$ indexed by $P$. Since the entries of these columns in $S^t$ are known, the data collector has access to $\Phi_{(Q,\text{all})} S_{(\text{all},P)}~.$ Now since the $i$ rows of $S$ indexed through $P$ are also known, the data collector has thus access to the product \bea \Phi_{(Q,T)} S_{(T,P)}~.\eea
% \text{\tiny $(k-i \times k-i)$} \ \ \text{\tiny $(k-i \times k-i)$} \nonumber \eea
Now as $\Phi_{(Q,T)}$ is non-singular, being a $(k-|P|, \ k- |P|)$ sub-matrix of a Cauchy matrix, the data collector can recover $S_{(T,P)}$. In this way, the data collector has recovered all the entries in the rows of $S$ indexed by $P$ as well as all the entries in the columns of $S$ indexed by $P$.  Clearly, the same statement holds when $S$ is replaced by $S^t$. Thus the data collector has access to the product:
\beq \Phi_{(T,T)}(S +\rho S^t)_{(T,T)}~.\eeq
Again $\Phi_{(T,T)}$ being a sub-matrix of a Cauchy matrix is of full rank and enables the data collector to recover $(S +\rho S^t)_{(T,T)}$. It is easy to see that from the diagonal elements of this matrix, all the diagonal elements of $S_{(T,T)}$ can be obtained. The non-diagonal elements are however of the form $S_{l j}+\rho S_{j l}$ and $S_{j l}+\rho S_{l j}$ for $l \in T, \ j \in T, \ l \neq j$. As $\rho^2 \neq 1$, all the non-diagonal elements of $S_{(T,T)}$ can also be decoded. In this way, the data collector has recovered all the $B$ entries of $S$. \end{IEEEproof}

\section{Equivalent Codes and Conversion of Non-Systematic Codes to Systematic}
\label{app:gen_mx}
In this section, we define the notion of `equivalent codes', and show that any exact-regenerating code is equivalent to a systematic exact-regenerating code.

Given any linear exact-regenerating code, one can express each of the $n\alpha$ symbols stored in the nodes as a linear combination of the $B$ message symbols $\{u_i\}_{i=1}^{B}$. Let $\{c_{ij} | 1 \leq i \leq n,~1 \leq j \leq \alpha\}$ denote the $j^{th}$ symbol stored in the $i^{th}$ node. Thus we have the relation: 
\bea [u_1 ~ u_2 ~ \cdots ~u_B] [ G_1 ~ G_2 ~ \cdots~ G_n ] \ = \ \left[ c_{11} \cdots
c_{1\alpha} | c_{21} \cdots c_{2\alpha} | \cdots \cdots |
c_{n1} \cdots c_{n\alpha} \right]~,
 \eea
where the $(B \times n\alpha)$ block generator matrix $G=[ G_1 \ G_2 \ \cdots G_n]$ is composed of the $n$ component generator sub-matrices
\[
G_i \ = \  \left[ \begin{array}{cccc} \underline{g}_{i1} &
\underline{g}_{i2} & \cdots & \underline{g}_{i \alpha}
\end{array} \right],
\]
each of size $(B \times \alpha)$, and associated to a distinct node.\footnote{In the terminology of network coding, the $(B \times 1)$ column vector $\underline{g}_{ij}$ is termed the $j^{th}$ global kernel associated to the $i^{th}$ node.} 
Let $W_i$ denote the column-space of $G_i$. A little thought will show that a distributed storage code is an exact-regenerating code iff 
\ben
\item for every subset of $k$ nodes $\{i_j \mid 1 \leq j \leq k\}$,
\[
\text{dim}(W_{i_1}+W_{i_2}+\cdots+W_{i_k}) \ = \ B~,\]
 and
\item for every subset of $(d+1)$ nodes $\{i_j \mid 1 \leq j \leq (d+1)\}$, the subspaces
$\{W_{i_j}\}_{j=1}^d$ contain a vector $\underline{w}_j$ such that
\[
W_{i_{d+1}} \ \subseteq \ \text{span}\left( \underline{w}_{i_1},
\underline{w}_{i_2}, \cdots, \underline{w}_{i_d} \right) .
\]
\een
We can thus define two exact-regenerating codes to be \textit{equivalent} if the associated subspaces $\{W_i\}_{i=1}^n$ are identical. It is also clear that two codes are equivalent if one can be obtained from the other through a non-singular transformation of the message symbols and the symbols stored within the nodes. With these two observations, it follows that two codes with generator matrices having the following relation are equivalent:
\[
G, \ \ \ \text{and} \ \  \ X G\left[ \begin{array}{cccc} Y_1 & & & \\
& Y_2 & & \\
&  & \ddots & \\
&  & & Y_n \end{array} \right],
\]
where the $(B \times B)$ pre-multiplication matrix $X$, and the $(n\alpha \times n\alpha)$ post-multiplication block diagonal matrix comprising of the $(\alpha \times \alpha)$ matrices $\{Y_i\}_{i=1}^{n}$, are non-singular. Clearly, equivalent codes have identical data-reconstruction and regeneration properties.

\paragraph*{Systematic Version of Exact-Regenerating Codes}
It also follows that any exact-regenerating code is equivalent to a systematic, exact-regenerating code. To see this, suppose the set of $k$ nodes to be systematic are the first $k$ nodes. Let \[ \{
\underline{\tilde{g}}_{a_1}, \underline{\tilde{g}}_{a_2}, \cdots, \underline{\tilde{g}}_{a_B}
\} \ \subseteq \ \{ \underline{g}_{ij} \mid 1 \leq i \leq k, \ 1
\leq j \leq \alpha \}
\]
denote a set of $B$ linearly independent column vectors drawn from the generator matrices of the first $k$ nodes $[G_1~\cdots~G_k]$. That such a subset is guaranteed to exist follows from the data-reconstruction property of a regenerating code. Let $\tilde{G}$ be the $(B \times B)$ invertible matrix \[ \tilde{G} \ = \ \left[ \begin{array}{cccc} \underline{\tilde{g}}_{a_1} &  \underline{\tilde{g}}_{a_2} & \cdots & \underline{\tilde{g}}_{a_B} \end{array} \right] . \]
Then we have the relation: \bea [u_1~ u_2~ \cdots~ u_B]\tilde{G} & = &
[\tilde{c}_{a_1} ~\tilde{c}_{a_2}~  \cdots ~ \tilde{c}_{a_B}]~, \eea where $\{\tilde{c}_{a_i}\}_{i=1}^{B}$ is the corresponding set of code symbols. It follows that if we
wish to encode in such a way that the code is systematic with
respect to code symbols $\{\tilde{c}_{a_i}\}_{i=1}^{B}$, the
input to be ``fed'' to the generator matrix $G$ is
\[
[u_1~ u_2~ \cdots~ u_B] \tilde{G}^{-1}~.
\]
%It follows that the generator matrix
%\[
%G' \ = \ A^{-1} G
%\]
%is the generator matrix of an equivalent code in which the code
%symbols $[ c_{a1}, \ c_{a2}, \ \cdots, \ c_{aB} ]$ are precisely the
%message symbols themselves.

\section{Interference Alignment in the Product-Matrix MSR Code}\label{app:ia}
The concept of \textit{interference alignment} was introduced in~\cite{MotKhan_IA,CadJafar_IA} in the context of wireless communication. This concept was subsequently used to construct regenerating codes in~\cite{WuDimISIT,ourITW,Changho,ourMISERjournal}. Furthermore,~\cite{ourITW,ourMISERjournal} showed that interference alignment is in fact, a necessary ingredient of any linear MSR code. Since the product-matrix MSR construction provided in the present paper does not explicitly use the concept of interference alignment, a natural question that arises is how does interference alignment manifest itself in this code. We answer this question in the present section.

Consider repair of a failed node (say, node $f$) in a distributed storage system employing an MSR code, and let nodes $\{1,\ldots,d\}$ be the set of $d$ helper nodes. Recall that (from equation~\eqref{eq:MSR_parameters}), at the MSR point we have $B=k\alpha$. Further, since all the $B$ message symbols should be recoverable from any subset of $k$ nodes, it must be that any subset of $k$ nodes does not store any redundant information. Let $\underline{c}_i$, $1 \leq i \leq n$, be an $\alpha$-length vector denoting the $\alpha$ symbols stored in node~$i$. Then, from the above argument, it is clear that any symbol in the system can be written as a linear combination of the $B$ symbols in $\{\underline{c}_f, \; \underline{c}_1, \; \ldots,\; \underline{c}_{(k-1)}\}$.

Let $\theta_{\ell}$, $k \, \leq \, \ell \, \leq \, d$, denote the symbol passed by node~$\ell$ to assist in the repair of node $f$. Then we can write,
\beq \theta_{\ell}= \underline{c}_f^t \, \underline{v}_{\ell,f} + \sum^{k-1}_{i=1}   \underline{c}^t_i \, \underline{v}_{\ell, i}\eeq
for some vectors $\underline{v}_{\ell, i}$ and $\underline{v}_{\ell,f}$ each of length $\alpha$. The symbols in $\{\underline{c}_f, \; \underline{c}_1, \; \ldots,\; \underline{c}_{(k-1)}\}$ have no redundancy among themselves. Thus, the components comprising of $\{\underline{c}_1, \; \ldots,\; \underline{c}_{(k-1)}\}$ are undesired and hence are termed as \textit{interference components}, and the component comprising of $\underline{c}_f$ is termed the \textit{desired component}.

It is shown in~\cite{ourITW,ourMISERjournal} that for any MSR code, it must be that for every $i \in \{1,\, \ldots, \,k-1\}$, the set of vectors
\beq \left\lbrace \underline{v}_{\ell,i} | k \leq \ell \leq d \right\rbrace \eeq
are \textit{aligned} (i.e., are scalar multiples of each other).

The following lemma considers the repair scenario discussed above to illustrate how interference alignment arises in the product-matrix MSR code presented in Section~\ref{sec:MSR_prodmx}.
\begin{lem}
For every helper node $\ell$, \, $k \leq \ell \leq d$, there exist scalars $\{a_{(\ell,i)} | 1 \leq i \leq k-1\}$ and an $\alpha$-length vector $\underline{b}_\ell = [b_{\ell,1} \cdots b_{\ell,\alpha}]^t$ such that
\beq \underline{\psi}_\ell^t M \underline{\phi}_f = \underline{\psi}_f^t M \underline{b}_\ell + \sum_{i=1}^{k-1} a_{\ell,i} \underline{\psi}_i^t M \underline{\phi}_f~.\eeq
\end{lem}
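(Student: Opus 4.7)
My plan is to substitute the product-matrix parametrization $\underline{\psi}_i^t = [\underline{\phi}_i^t \;\; \lambda_i \underline{\phi}_i^t]$ and $M = \begin{bmatrix} S_1 \\ S_2 \end{bmatrix}$ into both sides of the claimed identity, use the symmetry $S_j^t = S_j$ to bring $\underline{\phi}_f$ to the left in every bilinear form, and reduce the statement to the existence of $\underline{b}_\ell$ and scalars $\{a_{\ell,i}\}$ satisfying a pair of vector equations. After expansion and the symmetry rewrites $\underline{\phi}_i^t S_j \underline{\phi}_f = \underline{\phi}_f^t S_j \underline{\phi}_i$, the identity becomes
\[ \underline{\phi}_f^t S_1 \underline{u} + \underline{\phi}_f^t S_2 \underline{v} = 0, \]
where $\underline{u} = \underline{\phi}_\ell - \underline{b}_\ell - \sum_i a_{\ell,i}\underline{\phi}_i$ and $\underline{v} = \lambda_\ell \underline{\phi}_\ell - \lambda_f \underline{b}_\ell - \sum_i a_{\ell,i} \lambda_i \underline{\phi}_i$.

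It then suffices to choose $\underline{b}_\ell$ and the scalars $\{a_{\ell,i}\}$ so that $\underline{u} = \underline{v} = \underline{0}$; no genericity argument on $S_1, S_2$ is required, since setting these two vectors to zero makes the identity hold term by term. Eliminating $\underline{b}_\ell$ via $\underline{u} = \underline{0}$ and substituting into $\underline{v} = \underline{0}$ collapses the problem to the single vector equation
\[ \sum_{i=1}^{k-1} a_{\ell,i}(\lambda_i - \lambda_f)\,\underline{\phi}_i \;=\; (\lambda_\ell - \lambda_f)\,\underline{\phi}_\ell. \]

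The main step is to solve this last equation. By the construction of $\Psi$, any $\alpha = k-1$ rows of $\Phi$ are linearly independent, so $\underline{\phi}_1, \ldots, \underline{\phi}_{k-1}$ form a basis of $\mathbb{F}_q^\alpha$. Since the diagonal entries of $\Lambda$ are distinct, $\lambda_i \neq \lambda_f$ for every $i \in \{1,\ldots,k-1\}$, and therefore $\{(\lambda_i - \lambda_f)\underline{\phi}_i\}_{i=1}^{k-1}$ is also a basis; unique scalars $a_{\ell,i}$ expressing $(\lambda_\ell - \lambda_f)\underline{\phi}_\ell$ in this basis exist, after which $\underline{b}_\ell$ is fixed by $\underline{u}=\underline{0}$. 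The only concern is that the system has $\alpha + (k-1) = d$ unknowns against $d$ scalar conditions, so a compatibility failure between the two vector equations would be fatal; the elimination above shows that precisely the two structural properties built into the construction — linear independence of any $\alpha$ rows of $\Phi$ and distinctness of the $\lambda_i$'s — are what guarantees consistency.
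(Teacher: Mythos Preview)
Your proof is correct and uses essentially the same ingredients as the paper's: the symmetry of $S_1,S_2$ to flip the bilinear forms, the fact that $\{\underline{\phi}_i\}_{i=1}^{k-1}$ is a basis of $\mathbb{F}_q^{\alpha}$, and the distinctness of the $\lambda_i$. The only difference is organizational: the paper works forward, first deriving the identity $\underline{\psi}_j^t M \underline{\phi}_f = \underline{\psi}_f^t M \underline{\phi}_j + (\lambda_j-\lambda_f)\underline{\phi}_f^t S_2 \underline{\phi}_j$ and then expanding $\underline{\phi}_\ell = \sum_i \tilde a_{\ell,i}\underline{\phi}_i$ to read off $a_{\ell,i} = \tilde a_{\ell,i}(\lambda_\ell-\lambda_f)(\lambda_i-\lambda_f)^{-1}$ and $\underline{b}_\ell$ explicitly, whereas you set up the unknowns first and solve the resulting linear system; the two routes yield the same $a_{\ell,i}$ and $\underline{b}_\ell$.
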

\begin{proof}
Re-writing the symbols passed by the helper node $j$ ($1 \leq j \leq d$),
\bea \underline{\psi}_j^t M \underline{\phi}_f 
&=& \left[\underline{\phi}_j^t S_1 + \lambda_j \underline{\phi}_j^t S_2\right] \underline{\phi}_f\\
&=& \left[\underline{\phi}_f^t S_1 + \lambda_j \underline{\phi}_f^t S_2\right] \underline{\phi}_j \label{eq:interference_alignment1}\\
&=& \left[\underline{\phi}_f^t S_1 + \lambda_f \underline{\phi}_f^t S_2\right] \underline{\phi}_j + (\lambda_j - \lambda_f) \underline{\phi}_f^t S_2 \underline{\phi}_j\\
&=& \underline{\psi}_f^t M \underline{\phi}_j + (\lambda_j - \lambda_f) \underline{\phi}_f^t S_2 \underline{\phi}_j~,\label{eq:interference_alignment2}
\eea
where equation~\eqref{eq:interference_alignment1} follows from the symmetry of matrices $S_1$ and $S_2$. By construction, the values of the scalars $\{\lambda_j \, | \, 1 \leq j \leq n\}$ are distinct, which allows us to write
\beq
\underline{\phi}_f^t S_2 \underline{\phi}_j = (\lambda_j - \lambda_f)^{-1} (\underline{\psi}_j^t M \underline{\phi}_f  - \underline{\psi}_f^t M \underline{\phi}_j)~.\label{eq:interference_alignment3}
\eeq
Also, since the $(k-1 =) \alpha$-length vectors $\{\underline{\phi}_i \, | \, 1\leq i \leq k-1\}$ are linearly independent, for $k \leq \ell \leq d$, there exist scalars $\{\tilde{a}_{\ell,i} \,| \, 1\leq i \leq k-1\}$ such that
\beq \underline{\phi}_\ell = \sum_{i=1}^{k-1} \tilde{a}_{\ell,i} \underline{\phi}_i~.\label{eq:interference_alignment4}\eeq
From equations~\eqref{eq:interference_alignment2},~\eqref{eq:interference_alignment3} and~\eqref{eq:interference_alignment4}, for any $\ell \in \{k, \; \ldots, \; d\}$, we can write
\bea \underline{\psi}_\ell^t M \underline{\phi}_f 
&\!\!=\!\!& \underline{\psi}_f^t M \underline{\phi}_\ell + (\lambda_\ell - \lambda_f) \underline{\phi}_f^t S_2 \underline{\phi}_\ell \label{eq:interference_alignment5} \\
 &=& \underline{\psi}_f^t M \underline{\phi}_\ell + (\lambda_\ell - \lambda_f) \sum_{i=1}^{k-1} \tilde{a}_{\ell,i} \underline{\phi}_f^t S_2 \underline{\phi}_i \label{eq:interference_alignment6} \\
% &=& \underline{\psi}_f^t M \underline{\phi}_\ell + (\lambda_\ell - \lambda_f) \sum_{i=1}^{k-1} \tilde{a}_{\ell,i} \underline{\phi}_f^t S_2 \underline{\phi}_i\\
% &=& \underline{\psi}_f^t M \underline{\phi}_\ell + (\lambda_\ell - \lambda_f) \sum_{i=1}^{k-1} \tilde{a}_{\ell,i} (\lambda_i - \lambda_f)^{-1} (\underline{\psi}_i^t M \underline{\phi}_f  - \underline{\psi}_f^t M \underline{\phi}_i)\\
&\!\!=\!\!& \underline{\psi}_f^t M \left( \underline{\phi}_\ell - (\lambda_\ell - \lambda_f) \sum_{i=1}^{k-1} \tilde{a}_{\ell,i} (\lambda_i - \lambda_f)^{-1} \underline{\phi}_i \right)  + \sum_{i=1}^{k-1}  \left( \tilde{a}_{\ell,i} (\lambda_\ell - \lambda_f)(\lambda_i - \lambda_f)^{-1}\right) (\underline{\psi}_i^t M \underline{\phi}_f) \label{eq:interference_alignment7}
\eea
where equation~\eqref{eq:interference_alignment6} follows from~\eqref{eq:interference_alignment4}, and equation~\eqref{eq:interference_alignment7} follows from~\eqref{eq:interference_alignment3}.
\end{proof}
%Thus, the components of the symbols passed by helper node $\ell$ along the symbols stored in node $j$ is $a_{\ell,i} \underline{\psi}_i^t M \underline{\phi}_f$ are scalar multiples of each other, and are hence \textit{aligned}.

\end{document}